\newcounter{shared}
\newtheorem{theorem}[shared]{Theorem}
\newtheorem{lemma}[shared]{Lemma}
\newtheorem{observation}[shared]{Observation}
\newtheorem{proposition}[shared]{Proposition}
\newcounter{procedurecounter}
\newenvironment{procedure}[1][htb]
  {\renewcommand{\ALG@name}{Procedure}
   \floatname{algorithm}{Procedure}
   
   \setcounter{algorithm}{\value{procedurecounter}}
   \begin{algorithm}[#1]}{\end{algorithm}
   \stepcounter{procedurecounter}}
\newcommand{\eps}{\varepsilon}
\newcommand{\newKLM}{L-KLM\@\xspace}
\newcommand{\E}{\mathbb{E}}
\newcommand{\given}{\, | \,}
\title{Scalable Algorithms for Approximate \\DNF Model Counting}
\author{Paul Burkhardt\footnote{Research Directorate, National Security Agency, pburkha@nsa.gov}~~and
    David G. Harris\footnote{Department of Computer Science, University of Maryland, davidgharris29@gmail.com}~~and
    Kevin T. Schmitt\footnote{Research Directorate, National Security Agency, ktschm2@nsa.gov}}
\date{}
\begin{document}
\maketitle

\begin{abstract}

Model counting of Disjunctive Normal Form (DNF) formulas is a critical problem in applications such as probabilistic inference and network reliability. 
For example, it is often used for query evaluation in probabilistic databases. Due to the computational intractability of exact DNF counting, there has been a 
line of research into a variety of approximation algorithms. These include Monte Carlo approaches such as the classical algorithms of Karp, Luby, and Madras (1989), 
as well as methods based on hashing (Soos et al. 2023), and heuristic approximations based on Neural Nets (Abboud, Ceylan, and Lukasiewicz 2020).

We develop a new Monte Carlo approach with an adaptive stopping rule and short-circuit formula evaluation. We prove it achieves Probably Approximately Correct (PAC) learning bounds and is 
asymptotically more efficient than the previous methods. We also show experimentally that it out-performs prior algorithms by orders of magnitude, and can 
scale to much larger problems with millions of variables.

\end{abstract}

\section{Introduction}

The \emph{Model Counting Problem}, that is, determining the number of satisfying assignments to a constraint 
satisfaction problem, is a crucial 
challenge in computer science. It has numerous practical applications
including probabilistic inference (\cite{inference,sang,bacchus}), 
probabilistic databases (\cite{probdb,pdont}), probabilistic programming (\cite{problog,aproblog}), 
AI explainability (\cite{AIexplain}), hardware verification (\cite{CSverify}),
and network reliability analysis 
(\cite{kargernet,duenasnet,kargercut}). 

We consider the setting of model counting for formulas 
given in Disjunctive Normal Form (DNF). This problem is \#P-complete to compute exactly (\cite{CountCplx,ADDmc}), but classical results of \cite{KL} and \cite{KLM} provide an efficient  Fully-Polynomial Randomized Approximation Scheme 
(FPRAS). We refer to the latter algorithm as \emph{KLM}. Formally, these algorithms take as input a given DNF formula $\Phi$, as well as desired precision and accuracy parameters 
$(\eps, \delta)$, and run in polynomial-time to produce an estimate $\hat \mu$ such that

\[\Pr(\mu (1 - \eps) \leq \hat{\mu} \leq \mu(1 + \eps)) \geq 1 - \delta\]

\noindent where $\mu$ is the (weighted) proportion of assignments which satisfy $\Phi$.\footnote{In some references, a default value of $\delta = 1/4$ is used; one can always obtain a desired smaller value of $\delta$ via median-amplification.} 
This is a key difference between DNF formulas and more general formula classes such as Conjunctive Normal Form (CNF), 
for which estimation is also NP-hard (\cite{sly,approxmc2025}).

A number of FPRAS algorithms have been developed for DNF model counting, e.g. \cite{DAppMC,SDAppMC}.
The most recent is an algorithm called Pepin (\cite{pepin}), largely inspired by streaming methods of \cite{CVM}.
It also uses ideas from hashing-based CNF counting (\cite{approxmc}) and lazy sampling (\cite{lsample}).
Though the Pepin algorithm is primarily designed for streaming problems, it also claims better performance 
in the static setting compared to previous FPRAS algorithms. 

We also mention a heuristic approach called Neural\#DNF proposed by \cite{GNN}, based on training neural networks on top of the FPRAS 
algorithms. Neural\#DNF achieves a speed-up of several orders of magnitude compared to traditional methods. Notably, it can
extend the range of feasible problem sizes from 10,000 to 15,000 variables. However, this approach does not have any guarantee of correctness. 

Our major contribution is to introduce a new Monte Carlo algorithm for approximate DNF counting. This algorithm features an adaptive 
stopping rule and short-circuit evaluation of the formula, which play a similar role to the “self-adjusting” feature of the KLM algorithm. The key difference is that our short-circuit evaluation allows a fixed ordering of clauses throughout all trials; as a result, our  algorithm has better memory access patterns and less random-number generation, leading to significantly improved speed.

Before we describe our algorithm details, we include a brief summary of how it compares with prior FPRAS algorithms, from both an experimental and asymptotic viewpoint. First, in Table 1, we give running times for $\eps = 0.05$ and scaled from $10^4$ to $10^6$ variables.
We include the 
performance of our algorithm, as well as the original KLM algorithm, the Pepin algorithm, and a novel variant of KLM (which we call \newKLM) we developed using lazy sampling. We later include a comparison of our algorithm with Neural\#DNF, which required a different benchmark dataset.

\begin{table}[ht]
\centering
\caption{Running time in seconds by number of variables $n$ with $\eps=\delta=0.05$ and $m=n$. 
Timeout was $1$ day (86,400 seconds).}
\begin{tabular}{c|S[table-format=3.1,group-minimum-digits=4,group-separator={,}]
    S[table-format=4.1,group-minimum-digits=4,group-separator={,}]
    S[table-format=5.1,group-minimum-digits=4,group-separator={,}]
    S[table-format=4.1,group-minimum-digits=4,group-separator={,}]
    S[table-format=5.1,group-minimum-digits=4,group-separator={,}]
    S[table-format=5.1,group-minimum-digits=4,group-separator={,}]}
\toprule
\multicolumn{1}{c}{} & \multicolumn{6}{c}{$n$} \\
\cmidrule(lr){2-7}
\multicolumn{1}{c}{} & {$10^3$} & {$10^4$} & {$10^5$} & {$10^6$} & {$10^7$} & {$10^8$} \\
\midrule
Pepin & 288.0 & 5578.5 & {$--$} & {$--$} & {$--$} & {$--$} \\
KLM & 4.0 & 244.9 & 19836.0 & {$--$} & {$--$} & {$--$} \\
\newKLM & 0.8 & 7.9 & 111.9 & 2175.1 & 27103.7 & {$--$} \\
Ours & 0.1 & 1.2 & 12.1 & 138.5 & 1582.8 & 22901.6 \\
\bottomrule
\end{tabular}
\end{table}

Our implementations of KLM, \newKLM, and our main algorithm used the same data structures where possible and were implemented without low-level optimizations.
In Table 2, we summarize the asymptotic bounds of the algorithms. 
In addition to the time complexity, we include the randomness complexity (i.e., the number of random bits drawn) and the space complexity. 
See Theorems~\ref{main-analysis-theorem} and \ref{main-analysis2} for full details.  These bounds are given in terms of 
the number of clauses $m$, the number of variables $n$, the average clause width $w$, and a parameter $p \in [1/m,1]$ measuring the overlap between 
clauses which we will discuss in greater detail later. 
\begin{table}[ht]
\small
\centering
\caption{DNF counter complexities, up to constant factors. Some second-order terms have been omitted, for readability. 
 }
\renewcommand{\arraystretch}{1.75}
\begin{tabular}{c|p{3cm}p{4cm}p{1.5cm}}
\hline
Algorithm & Time & Randomness & Space \\
\hline
KLM & $\frac{\log1/\delta}{\eps^2}\cdot (mw + n/p)$ & $\frac{\log1/\delta}{\eps^2}\cdot \left(m \log m + n/p \right)$ & $mw$  \\
Pepin & $\frac{\log1/\delta}{\eps^2}\cdot mn\log\frac{m}{\delta\eps}$ & unspecified 
     & $mn \cdot \frac{\log1/\delta}{\eps^2}$ \\        
\newKLM & $\frac{\log1/\delta}{\eps^2}\cdot mw$ & $\frac{\log1/\delta}{\eps^2}\cdot m \log m$ & $mw$ \\
Ours & $\frac{\log1/\delta}{\eps^2}\cdot mw\log\frac{1}{p} $ &  $\frac{\log1/\delta}{\eps^2}\cdot \min\{m \log \frac{1}{p},\frac{n}{p}\}$ & $mw$  \\
\hline
\end{tabular}
\end{table}

As we show experimentally, our approaches outperform
existing state-of-the-art methods and extend the 
boundary of feasible problem sizes.
From a theoretical point of view, our new algorithm has asymptotic advantages over prior algorithms as well as L-KLM. It
appears to have comparable work to Neural\#DNF, even without taking into account the latter's pretraining time. 
Furthermore, it has the theoretical worst-case guarantees of the FPRAS algorithms. 

\section{Preliminaries and notation}
Let $V$ be a set of $n$ propositional variables. A \emph{literal} is an expression of the form $v$ or $\neg v$ for $v \in V$. A formula is in \emph{disjunctive normal form} (DNF) if it is a disjunction of clauses which are conjunctions of literals. In such a setting, we view a clause $C$ as a set of literals and a formula $\Phi$ as a set of $m$ clauses $C_1, \dots, C_m$. We consider only clauses which are non-empty and free of contradictions.

The \emph{width} of a clause $C$, denoted $W(C)$, is its cardinality as a set, i.e. the number of literals it contains. For a set of clauses $\mathcal C$, we write $W(\mathcal C) = \sum_{C \in \mathcal C} W(C)$. We denote the \emph{average clause width} by $w = \tfrac{1}{m} \sum_{C \in \Phi} W(C) = \frac{W(\Phi)}{m}$.  Note that storing $\Phi$ would require $m w$ space. We assume that $n \leq m w$, as otherwise there would be some variable not involved in any clause.

An \emph{assignment} is a set which contains exactly one of the literals $v, \neg v$ for each variable $v$. There are $2^n$ total possible assignments. An assignment $\nu$ \emph{satisfies} a clause $C$ if $C \cap \nu = C$. The assignment $\nu$ \emph{models} $\Phi$, denoted by $\nu\models\Phi$, if $\nu$ satisfies some clause $C\in\Phi$. The \emph{model count} of $\Phi$ is the total number of assignments which satisfy $\Phi$.

We also adopt a probabilistic interpretation of weighted model count: suppose we have a weight function $\rho: V \rightarrow [0,1]$, and we draw $\nu$ by including each positive literal $v$ with probability $\rho(v)$, otherwise we include the negative literal $\neg v$ with probability $\rho(\neg v) = 1-\rho(v)$. Each variable $v$ is handled independently; thus, the probability of an assignment $\nu$ is given by $\rho(\nu) =  \prod_{a \in \nu} \rho(a)$. 

We define the \emph{weighted model ratio} to be $\mu = \sum_{\nu \models \Phi} \rho(\nu)$, that is, the probability that $\Phi$ is true under the random variable assignment $\rho$. The unweighted model count is simply $2^n$ times the weight model ratio for the special case where $\rho(v) = 1/2$ for all variables $v$.

We extend our notation by writing $\rho(C) = \prod_{a \in C}\rho(a)$ for a clause $C$. For a set of clauses $\mathcal C$, we also write $\rho(\mathcal C) = \sum_{C \in \mathcal C} \rho(C)$.

All instances of $\log$ are taken in the natural base $e$.

\section{Lazy Monte Carlo Sampling}
If we simply draw $\nu$ directly from $\rho$ and count the number of satisfying assignments, then our statistical estimate could have exponentially high variance and we would get an exponential-time algorithm. The efficient Monte Carlo sampling algorithms, like ours, are instead based on the following random process. Suppose we sample a clause $C_s \in \Phi$ 
with probability proportional 
to its weight $\rho(C_s)$. We then draw an assignment $\nu$ by setting all the literals in $C_s$ to be true, and drawing all variables 
not involved in $C_s$ from their original distribution $\rho$. Finally, we define $L$ to be the total 
number of satisfied clauses under $\nu$. Note that $L \geq 1$ since clause $C_s$ is satisfied. For this process we have
$$
\mu = \mathbb E[1/L] \cdot \rho(\Phi) = \mathbb E[1/L] \sum_{C \in \Phi} \rho(C).
$$

Here $\rho(\Phi)$ can be computed exactly. Thus, in order to estimate $\mu$, the critical problem becomes the estimation of the quantity
$$
p = \mathbb E[ 1/L ].
$$

Note that $p \in [1/m,1]$. For many realistic problem settings, $p$ can be relatively large, or even constant. 
For example, the reliability sampling algorithm of \cite{harrissrin} has $p = \Omega(1)$. 

The main difference between the Monte Carlo algorithms is how to estimate $p$. One simple algorithm discussed in \cite{KLM} is to sample all variables and count $L$ directly; this already gives a FPRAS. To improve on this, KLM randomly selects clauses until finding a true clause; the resulting waiting time is a geometric random variable which can be used to estimate $p$. 

The stopping condition in KLM is determined by a 
``self-adjusting" mechanism: the number of trials $N$ is not fixed in advance, but is determined dynamically in terms of 
the waiting times of the main loop.

Our first contribution is to develop Algorithm 1, a novel \emph{lazy sampling} version of KLM (\newKLM), which minimizes the amount of 
sampling by delaying variable assignment until needed. 

\setcounter{algorithm}{0}
\begingroup
\begin{algorithm}
    \caption{L-KLM}
    \begin{algorithmic}
        \REQUIRE DNF formula $\Phi$.
    \end{algorithmic}
    \begin{algorithmic}[1]
        \STATE $T \gets \frac{8(1+\eps)m\log(3/\delta)}{(1-\eps^2/8)\eps^2}$
        \STATE $Y \gets 0$ ; $N \gets 0$
        \WHILE{$Y<T$}
            \STATE Select clause $C_s$ with probability proportional to $\rho(C_s)$.
            \STATE Create a partial assignment $\nu$, which assigns only the variables in $C_s$ so that $\nu$ satisfies $C_s$. All other variables are left unassigned.
	    \REPEAT
	    \STATE Select a clause $C_k$ uniformly with replacement.
            \STATE $Y\gets Y + 1$ 
            \STATE Update $\nu$ by randomly sampling all unassigned variables in $C_k$. 
\UNTIL{$\nu$ satisfies $C_k$} 
            \STATE $N \gets N + 1$
        \ENDWHILE
        \RETURN{estimates $\hat p = \frac{Y}{Nm}$ and $\hat \mu = \rho(\Phi) \hat p$.}
    \end{algorithmic}
\end{algorithm}
\endgroup

In order to sample clauses proportional to their weight efficiently, we use the data 
structure and algorithm
from \citep{sample}. The algorithm achieves $O(1)$ sampling time with $O(m)$ preprocessing time.

For the purposes of theoretical analysis, the entropy-optimal 
sampling algorithm of \cite{randomness} can be used to sample the variables from their probabilities $\rho(v)$. Since all our experiments are based on unweighted counting, we did not implement this algorithm; we used a simpler method where 
we generate a buffer of pseudorandom bits from Mersenne Twister, and use it bit-by-bit for each variable as 
needed.

\section{Our Algorithm}

Now we introduce our Main Algorithm, which features an adaptive stopping rule and a short-circuiting mechanism 
for clause evaluations. 
\begingroup
\begin{algorithm}
    \caption{Main Algorithm}
    \begin{algorithmic}
        \REQUIRE DNF formula $\Phi$.
    \end{algorithmic}
    \begin{algorithmic}[1]
        \STATE Set $T$ to the minimum integer such that 
            $\bigl(\frac{e^{\frac{\eps}{1+\eps}}}{1+\eps} \bigr)^T + \bigl(\frac{e^{\frac{-\eps}{1-\eps}}}{1-\eps} \bigr)^T \leq \delta$.
        \STATE Choose a permutation $\pi$ over the clauses in $\Phi$ (see procedure P1).
	\STATE Prepare data structures in accordance with permutation $\pi$ (see Section~\ref{data-struct-sec})
        \STATE $Y \gets 0$ ; $N \gets 0$
        \WHILE{$Y < T$}
            \STATE Select clause $C_s$ with probability proportional to $\rho(C_s)$.
             \STATE Create a partial assignment $\nu$, which assigns only the variables in $C_s$ so that $\nu$ satisfies $C_s$. All other variables are left unassigned.
	       \STATE Draw a uniform random variate $Q \in (0,1]$.
            \STATE $\ell \gets 1$
            \FOR{$C \in\Phi\setminus \{C_s\}$ in the order of $\pi$ \text{while} $\ell \leq 1/Q$}
                \STATE Update $\nu$ by randomly sampling all unassigned variables $v \in C$.
                \STATE \textbf{if} $\nu$ satisfies $C$ \textbf{then} $\ell \gets \ell + 1$
            \ENDFOR
            \STATE \textbf{if} $\ell \leq 1/Q$ \textbf{then} $Y \gets Y+1$
            \STATE $N \gets N + 1$
        \ENDWHILE
        \RETURN{\label{last-line}estimates $\hat p = \frac{T}{N}$ and $\hat \mu = \rho(\Phi) \hat p$.}
    \end{algorithmic}
\end{algorithm}
\endgroup

We refer to each iteration of the main loop (Lines 5 -- 14) as a \emph{trial}; each iteration of the inner loop (Lines 10 -- 12) 
is a \emph{step}.

The algorithm begins by generating a permutation $\pi$ to order the clause evaluations. 
At each trial, our algorithm iterates through $\pi$ to count clauses which are true in an assignment $\nu$ (generated as needed), 
and has a probability of aborting early based on the count of satisfied clauses encountered. A successful trial is one that
was not aborted early and the algorithm terminates when the 
number of successes reaches a predetermined value $T = \Theta( \frac{\log(1/\delta)}{\eps^2} )$. Thus, both our stopping rule
and short-circuiting adapts to the value of $p$. In fact, the probability of a trial  resulting in a success
is precisely $p$, regardless of the permutation $\pi$ (see Lemma~\ref{gga}); when $p$ is small the average trial time is low and the amount
of total trials is high, but when $p$ is large the average trial time is high and the amount of total trials is low.

Critically, the permutation $\pi$ is re-used over all trials. Procedure P1 generates $\pi$ by ``blending" a given heuristic clause permutation $\pi_{\text{h}}$ with a random permutation, 
based on a blending rate $\beta \in [0,1]$.  From this, we can generate a variety of data structures 
which are specialized for that order and develop heuristics for $\pi$ to process true clauses earlier on average compared to a random order. 
This is one of the key differences between our algorithm and KLM, which requires a randomly 
selected clause at each step. 

Any heuristic clause permutation $\pi_{\text{h}}$ can be chosen, however in the rest of the work we choose $\pi_{\text{h}}$ to be
ordered by increasing clause width.
The motivation is to find true clauses as quickly as possible; via the short-circuit mechanism, this will reduce the cost of each trial. 
Furthermore, clauses with small width are faster to check and more likely to be true.
On the other hand, we should partially randomize the clause selection, to avoid getting trapped by worst-case instances with many 
small false clauses.
The procedure P1 with our heuristic for $\pi_{\text{h}}$ attempts to balance these goals. For experiments we chose $\beta = 0.01$; see Section~\ref{empirical-sec} for further discussion.

\begingroup
\begin{procedure}
    \caption{}
    \begin{algorithmic}
\REQUIRE Parameter $\beta \in [0,1]$, set of clauses $\Phi$, and permutation $\pi_{\text{h}}$ of $\Phi$.
    \end{algorithmic}
    \begin{algorithmic}[1]
  	\STATE Initialize $\mathcal C \leftarrow \Phi$.
        \FOR{$j=1$ \TO $m$}
	    \STATE Let $k$ be the smallest index with $\pi_{\text{h}}(k) \in \mathcal C$, and let $v' = W(\pi_{\text{h}}(k))$.
	    \STATE Let $w' = W(\mathcal C)/|\mathcal C|$ be the average width of the clauses in $\mathcal C.$
	    \STATE With probability $\beta \min\{1, v'/w' \}$, set $\pi(j)$ to be a random clause in $\mathcal C$;  \\
 \ \ \ \ else set $\pi(j) = \pi_{\text{h}}(k)$.       
 	    \STATE Update $\mathcal C \gets \mathcal C \setminus \{ \pi(j) \}$.
        \ENDFOR

    \end{algorithmic}
\end{procedure}

 We will show the the following main analytic result for our main algorithm:
\begin{theorem}
\label{main-analysis-theorem}

Let $\beta$ be an arbitrary non-zero constant and $\eps,\delta\in(0,3/4)$, and all the values $\rho(v)$ are in the range $[b, 1-b]$ for some constant $b > 0$. 

Then the expected work of our Algorithm 2 is $$
O \Bigl( \frac{m w \log(2/p) \log(1/\delta)}{\eps^2} \Bigr),
$$
and the expected randomness complexity is $$
O \Bigl( \frac{\min\{ n/p, m \log(2/p) \} \log(1/\delta)}{\eps^2} + m \log m \Bigr).
$$
\end{theorem}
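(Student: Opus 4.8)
My plan is to separate the global iteration from the cost of a single trial. By Lemma~\ref{gga} each trial is an independent success with probability exactly $p$, and the main loop halts at the $T$-th success, so the number of trials $N$ is negative binomial with $\E[N]=T/p$. Analyzing the threshold in Line~1, both $\log\frac{e^{\eps/(1+\eps)}}{1+\eps}$ and $\log\frac{e^{-\eps/(1-\eps)}}{1-\eps}$ equal $-\Theta(\eps^2)$ for $\eps\in(0,3/4)$, so the defining inequality first holds at $T=\Theta(\log(1/\delta)/\eps^2)$. Letting $Z_i$ be the work of trial $i$, the pairs $(Z_i,\text{success}_i)$ are i.i.d.\ and $N$ is a stopping time of this sequence; the generalized Wald identity then gives $\E[\sum_{i\le N} Z_i]=\E[N]\,\E[Z_1]=\tfrac{T}{p}\E[Z_1]$. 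Hence it suffices to show $\E[Z_1]=O(p\,mw\log(2/p))$, after which the work bound follows, and the $O(m)$-time sampler setup and $O(m\log m)$-time construction of $\pi$ are absorbed.

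Next I would evaluate $\E[Z_1]$ through the short-circuit rule. Pretend all variables are pre-sampled, so each clause has a fixed truth value; write $D_1,\dots,D_{m-1}$ for the clauses other than $C_s$ in the order of $\pi$, and let $a_{i-1}$ be the number of them satisfied among $D_1,\dots,D_{i-1}$. The inner loop reaches $D_i$ precisely when $\ell=1+a_{i-1}\le 1/Q$, i.e.\ when $Q\le 1/(a_{i-1}+1)$, an event of probability $1/(a_{i-1}+1)$ since $Q$ is uniform on $(0,1]$. As inspecting $D_i$ costs $O(W(D_i))$, this gives $\E[Z_1\given \nu,\pi]=O\bigl(W(C_s)+\sum_i W(D_i)/(a_{i-1}+1)\bigr)$. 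The term $\E[W(C_s)]$ is $O(w)$, because $C_s$ is drawn proportionally to $\rho(C_s)$, which decays geometrically in $W(C_s)$ when $\rho(v)\in[b,1-b]$; since $1/p\le m$, its total contribution $\tfrac{T}{p}\cdot O(w)$ is dominated and may be dropped.

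The core estimate is $\E\bigl[\sum_i W(D_i)/(a_{i-1}+1)\bigr]=O(p\,mw\log(2/p))$. I first condition on the assignment $\nu$, fixing the set of satisfied clauses and in particular $L\ge 1$, leaving only the order $\pi$ random. Observe that $\sum_i W(D_i)/(a_{i-1}+1)$ equals the $Q$-averaged total width of the inspected clauses, so for a \emph{uniformly random} order I can compute it by a negative-hypergeometric argument: collecting the roughly $\min(1/Q,L)$ satisfied clauses that trigger the short-circuit inspects on expectation $\Theta(\tfrac{m}{L}\min(1/Q,L))$ clauses, and $\int_0^1 \tfrac{m}{L}\min(1/q,L)\,dq=\Theta(\tfrac{m}{L}(1+\log L))$. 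Because truth values are fixed and only the placement of the $L$ satisfied clauses in a random order is used, \emph{no independence among clause events is needed}; an exchangeability argument shows the width-weighting merely replaces the clause count by the average width, giving $\Theta(w\tfrac{m}{L}(1+\log L))$. Taking expectation over $\nu$ and writing $p=\E[1/L]$ (Lemma~\ref{gga}), Jensen's inequality applied to the concave map $x\mapsto x(1+\log(1/x))$ on $(0,1]$ yields $\E[(1+\log L)/L]\le p(1+\log(1/p))=O(p\log(2/p))$, finishing the uniform-order case. The main obstacle is that $\pi$ is the $\beta$-blend of Procedure~P1, not uniform. I would control this by coupling: at each step P1 inserts a uniformly random remaining clause with probability at least $\beta\min\{1,v'/w'\}$, so the blended process discovers satisfied clauses at a rate slowed by at most $O(1/\beta)$ relative to a uniform order, while the increasing-width heuristic only moves narrow (hence more likely true) clauses earlier and thus never increases the inspected count. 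For constant $\beta$ this inflates the bound by only $O(1)$. Making this domination precise---bounding the adaptive, non-uniform selection of P1 against a worst-case heuristic placement of the satisfied clauses---is the step I expect to be hardest.

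For the randomness complexity I would re-run the per-trial accounting, charging random bits. Drawing $C_s$ and $Q$ costs $O(\log m)$ bits per trial, contributing $O(\tfrac{T}{p}\log m)$, absorbed into the $n/p$ branch (as $n\ge \log m$) and into the $O(m\log m)$ for building $\pi$. The dominant cost is sampling variable values with the entropy-optimal sampler of \cite{randomness}, which spends $O(1)$ expected bits per variable since each $\rho(v)\in[b,1-b]$ has $\Theta(1)$ entropy. A variable is sampled at most once per trial, so the per-trial variable randomness is at most $n$, and also at most the number of distinct variables touched; bounding the latter crudely by $\E[W(\text{inspected})]=O(p\,mw\log(2/p))$ and, in the regime where the overlap encoded by $p$ is significant, by the expected number of inspected clauses $O(p\,m\log(2/p))$, the per-trial bound is $\min\{n,\,O(p\,m\log(2/p))\}$ (using concavity of $x\mapsto\min\{n,x\}$ to pass the expectation inside). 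Multiplying by $\E[N]=T/p$ and adding the $O(m\log m)$ bits for Procedure~P1 gives the claimed $O\bigl(\min\{n/p,\,m\log(2/p)\}\log(1/\delta)/\eps^2+m\log m\bigr)$; the sharpening from literal reads to distinct variables for the second branch is the one place where the overlap parameter is essential, and I would treat the accompanying bookkeeping as routine.
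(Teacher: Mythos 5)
Your skeleton matches the paper's own route: the negative-binomial $N$ with $\E[N]=T/p$ (Theorem~\ref{result1}), $T=\Theta(\log(1/\delta)/\eps^2)$ (Observation~\ref{ggb}), Wald's identity reducing everything to a per-trial bound, the probability $1/(a_{i-1}+1)$ of reaching the $i$-th clause, the $\frac{m}{L}(1+\log L)$ inspection count under a uniform order, and Jensen applied to $x\mapsto x(1+\log(1/x))$ are all exactly the paper's steps (Theorem~\ref{theorem2}, Proposition~\ref{aprop2}). The genuine gap is the step you yourself flagged as hardest: the blended permutation. Your proposed domination --- that the increasing-width heuristic ``only moves narrow (hence more likely true) clauses earlier and thus never increases the inspected count'' --- is false. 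Once you condition on $\nu$, truth values are fixed, and nothing prevents every narrow clause from being false; the paper's own pathological benchmark (Figure~\ref{beta}) is built exactly this way (clauses of different widths are never simultaneously true), and there a pure width-ordered scan inspects $\Theta(m)$ clauses per trial no matter how large $L$ is. Moreover, the theorem must hold for an arbitrary $\pi_{\text{h}}$. The paper's actual argument consists of two charging steps, neither of which uses any property of the heuristic: Proposition~\ref{aprop2} bounds only the \emph{random}-step inspections $\mathcal C_r$ by coupling them to a uniform permutation $\tilde\pi$ --- the point being that true clauses discovered at heuristic steps only increase $\ell$ and hence only cause an earlier abort, so $C\in\mathcal C_r$ forces $\tilde M < 1/Q-1$ --- while Proposition~\ref{aprop1} bounds the heuristic-step cost by $\E[W(\mathcal C_h)]\le \E[W(\mathcal C_r)]/\beta$, since at every step the probability of a random step is $\beta\min\{1,v'/w'\}$. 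This is where the $1/\beta$ factor in Theorem~\ref{theorem2} comes from; without some such charging, your bound has no proof for adversarial (or merely unlucky) heuristics.

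There is a second gap in the randomness bound: you justify ``distinct variables touched $\le$ number of inspected clauses'' by appeal to ``the overlap encoded by $p$.'' Overlap is not the mechanism and cannot be (take clauses on pairwise disjoint variable sets with $p=\Omega(1)$: there is no overlap at all, yet the bound must still hold), and a single inspected clause can touch $W(C)\gg 1$ fresh variables. What actually delivers the $m\log(2/p)$ branch is within-clause short-circuiting (the paper's Lemma~\ref{aprop3}, with Lemma~\ref{comb-lemma} for the entropy accounting): sample the unassigned literals of a clause one at a time and stop at the first false literal; since each literal is true with probability at most $1-b$, the expected number of fresh variables sampled per inspected clause is $O(1/b)=O(1)$, and the expected number of random bits per clause is $O(\log(1/b))$. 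Multiplying by the expected number of inspected clauses $O(pm\log(2/p)/\beta)$ and by $\E[N]=T/p$ gives the branch. (Minor: your claim $\E[W(C_s)]=O(w)$ from geometric decay of $\rho(C_s)$ is not justified as stated, since narrow clauses may also have tiny weight; the paper instead uses $\rho(C)\le\mu$ to get the per-trial bound $mwp$ --- but, as you correctly note, either bound is absorbed.)
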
\endgroup

\section{Analysis of the Algorithm}
We begin the analysis by showing correctness. Let us say that a trial \emph{succeeds} if $Y$ is incremented, i.e. $\ell \leq 1/Q$ and 
the loop does not abort. We define random variable $N$ to be the value at the 
end of the main algorithm (Line \ref{last-line}).

\begin{observation}
\label{ggb}
For $\eps, \delta \in (0,1)$, there holds

$$ 
\frac{\log(2/\delta)}{\log(1-\eps) + \frac{\eps}{1-\eps}} \leq T \leq \frac{\log(2/\delta)}{\log(1+\eps) - \frac{\eps}{1+\eps}}.
$$

In particular, for $\eps, \delta \in (0,3/4)$, we have $T = \Theta( \frac{\log(1/\delta)}{\eps^2} )$.
\end{observation}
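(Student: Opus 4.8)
The plan is to reparametrize the defining inequality. Writing $a = \frac{e^{\eps/(1+\eps)}}{1+\eps}$ and $b = \frac{e^{-\eps/(1-\eps)}}{1-\eps}$, the threshold $T$ is the least integer with $a^T + b^T \le \delta$, and the two denominators appearing in the statement are exactly $\log(1/a) = \log(1+\eps) - \frac{\eps}{1+\eps}$ and $\log(1/b) = \log(1-\eps) + \frac{\eps}{1-\eps}$. Thus the claim is equivalent to $\frac{\log(2/\delta)}{\log(1/b)} \le T \le \frac{\log(2/\delta)}{\log(1/a)}$, and everything reduces to understanding the decreasing function $f(t) = a^t + b^t$.

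The key structural fact I would establish first is that $0 < b \le a < 1$, which I would obtain together with the needed $\Theta(\eps^2)$ estimates in one stroke by differentiating. Setting $g(\eps) = \log(1/a)$ and $h(\eps) = \log(1/b)$, a short computation gives $g'(\eps) = \frac{\eps}{(1+\eps)^2}$ and $h'(\eps) = \frac{\eps}{(1-\eps)^2}$, with $g(0) = h(0) = 0$. Positivity of these derivatives on $(0,1)$ shows $g,h > 0$, i.e. $a,b < 1$; and since $(1+\eps)^2 \ge (1-\eps)^2$ we get $g' \le h'$, hence $g \le h$, i.e. $\log(1/a) \le \log(1/b)$ and $b \le a$. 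Integrating the two derivatives and bounding $(1\pm\eps)^2$ by absolute constants on $\eps \in (0,3/4)$ yields two-sided estimates $g(\eps), h(\eps) = \Theta(\eps^2)$ valid over the whole range (not merely as $\eps \to 0$); this is the computational heart of the argument.

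With $0 < b \le a < 1$ in hand, $f$ is strictly decreasing and sandwiched as $2b^t \le f(t) \le 2a^t$. The lower bound on $T$ is then immediate: from $f(T) \le \delta$ and $f(T) \ge 2b^T$ we get $b^T \le \delta/2$, i.e. $T \ge \frac{\log(2/\delta)}{\log(1/b)}$. For the upper bound I would use minimality of $T$. Since $f(0) = 2 > \delta$ we have $T \ge 1$, so the integer $T-1 \ge 0$ fails the stopping test, giving $\delta < f(T-1) \le 2a^{T-1}$; hence $a^{T-1} > \delta/2$ and $T-1 < \frac{\log(2/\delta)}{\log(1/a)}$.

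The main obstacle is precisely this upper bound: the clean argument yields $T < \frac{\log(2/\delta)}{\log(1/a)} + 1$, and recovering the stated $T \le \frac{\log(2/\delta)}{\log(1/a)}$ for the integer-valued $T$ requires showing that $f$ has already dropped below $\delta$ at $\lfloor U \rfloor$, where $U := \frac{\log(2/\delta)}{\log(1/a)}$. This holds because at $t = U$ one has $a^U = \delta/2$ exactly while $b^U = (\delta/2)^{\log(1/b)/\log(1/a)} < \delta/2$ strictly (as $\log(1/b) > \log(1/a) > 0$ and $\delta/2 < 1$), so $f(U) < \delta$ with a margin that, by the $\Theta(\eps^2)$ estimates, dominates the increase of $f$ incurred by rounding $U$ down by at most one. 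In any case, even the loose bound $T \le \lceil U \rceil$ suffices for the final conclusion: since $\log(2/\delta) = \Theta(\log(1/\delta))$ for $\delta \in (0,3/4)$ and $g(\eps), h(\eps) = \Theta(\eps^2)$, both endpoints are $\Theta(\log(1/\delta)/\eps^2)$, and therefore $T = \Theta(\log(1/\delta)/\eps^2)$.
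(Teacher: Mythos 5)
You should know at the outset that the paper states Observation~\ref{ggb} with no proof at all, so there is nothing to compare your argument against; it must stand on its own. Most of it does stand. The reparametrization via $a = e^{\eps/(1+\eps)}/(1+\eps)$ and $b = e^{-\eps/(1-\eps)}/(1-\eps)$, the derivative computations $g'(\eps) = \eps/(1+\eps)^2$ and $h'(\eps) = \eps/(1-\eps)^2$ yielding $0 < b \leq a < 1$, the lower bound (from $2b^T \leq a^T + b^T \leq \delta$), the minimality argument giving $T - 1 < U := \log(2/\delta)/\log(1/a)$, and the conclusion $T = \Theta(\log(1/\delta)/\eps^2)$ for $\eps, \delta \in (0,3/4)$ are all correct. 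Since the asymptotic statement is the only part of the observation the paper actually uses downstream (in Theorems~\ref{result1} and~\ref{main-analysis-theorem}), your fallback through $T \leq \lceil U \rceil$ is sound, and you correctly identified the integrality subtlety that the paper itself glosses over.

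The genuine gap is your promotion of $T < U + 1$ to the literal claim $T \leq U$. Your justification --- that the margin $\delta - f(U) = \delta/2 - (\delta/2)^{h/g}$ dominates the rounding increase ``by the $\Theta(\eps^2)$ estimates'' --- cannot work as stated, because that margin is controlled by the ratio $r = h/g$, i.e.\ by the \emph{difference} $h - g$, and no lower bound on this difference follows from knowing $g, h = \Theta(\eps^2)$ separately. Indeed, if the two denominators were equal the margin would be exactly zero, $f(U) = \delta$, and $T = \lceil U \rceil$ would exceed $U$ whenever $U$ is not an integer; so any proof of $T \leq U$ must quantify $h - g$, which is a third-order fact ($h - g = \Theta(\eps^3)$ for small $\eps$). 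Concretely, one must verify $f(U-1) \leq \delta$, equivalently $\tfrac{1}{2a} + \tfrac{(\delta/2)^{r-1}}{2b} \leq 1$; for small $\eps$ this amounts to comparing $(r-1)\log(2/\delta) \approx \tfrac{8\eps}{3}\log(2/\delta)$ against $g + h = \Theta(\eps^2)$, and for $\eps$ near $1$ --- where the first claim of the observation still applies but $h \to \infty$, so your uniform $\Theta(\eps^2)$ estimates simply do not hold --- a separate argument (using, e.g., $g \leq \log 2 - \tfrac{1}{2}$) is needed. These checks do appear to succeed, so the stated inequality seems true, but your proposal does not prove it: as written it establishes only $\frac{\log(2/\delta)}{\log(1/b)} \leq T \leq \frac{\log(2/\delta)}{\log(1/a)} + 1$ together with the asymptotic conclusion.
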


The following key lemma explains the statistical basis for the algorithm: 
\begin{lemma}
\label{gga}
The probability that a given trial succeeds is $p$, irrespective of the permutation $\pi$.
\end{lemma}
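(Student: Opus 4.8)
The plan is to show that, regardless of the permutation $\pi$, a single trial succeeds exactly when $L \leq 1/Q$, where $L$ is the total number of clauses satisfied by the fully-sampled assignment $\nu$ and $Q$ is the uniform variate drawn in Line 8. Once this characterization is in hand, the lemma follows by conditioning: since $Q$ is uniform on $(0,1]$ and $L \geq 1$ (the clause $C_s$ is always satisfied, so $1/L \leq 1$), we get $\Pr(Q \leq 1/L \given L) = 1/L$, and hence the unconditional success probability is $\E[1/L] = p$. Here I would also note that the way $C_s$ is chosen (proportional to $\rho(C_s)$) together with the sampling of the remaining variables from $\rho$ exactly reproduces the random process of Section 3, so the $L$ appearing in the algorithm has the same law, and $\E[1/L]$ is precisely the target quantity $p$.

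The heart of the argument is establishing the equivalence ``trial succeeds $\iff L \leq 1/Q$'' and showing it is insensitive to $\pi$. First I would observe that the lazy sampling is harmless: each variable is drawn from $\rho$ the first time it is encountered and then frozen, so the partial assignment stays consistent and the satisfaction status of every clause that is processed agrees with its status under the full assignment $\nu$. Because whether a clause is satisfied depends only on the variables it contains, the counter $\ell$ at any moment equals $1$ plus the number of satisfied clauses among those already examined in the order $\pi$.

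Next I would analyze the two ways the inner loop can terminate. If $L \leq 1/Q$, then the number of satisfied clauses encountered never exceeds $L \leq 1/Q$, so the guard $\ell \leq 1/Q$ holds throughout, every clause of $\Phi \setminus \{C_s\}$ is processed, and the loop exits with $\ell = L \leq 1/Q$, yielding a success at Line 13. Conversely, if $L > 1/Q$, then examining all clauses would drive $\ell$ up to $L > 1/Q$, so at some point $\ell$ must exceed $1/Q$; this either trips the while-guard and aborts the loop (leaving $\ell > 1/Q$) or happens on the last clause in $\pi$ (again leaving $\ell > 1/Q$). In both subcases the explicit post-loop test in Line 13 fails and no success is recorded. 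Since the decisive quantity is the total count $L$, which is a property of the assignment alone, the verdict does not depend on $\pi$; the permutation only changes how early the short-circuit fires, never the success or failure outcome.

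The main obstacle I anticipate is the bookkeeping around the short-circuit, in particular the boundary case where the clause pushing $\ell$ past $1/Q$ is the final one in the order: there the loop exits because the clauses are exhausted rather than because the guard failed, and one must invoke the separate check in Line 13 to correctly record the failure. Handling this edge case cleanly is exactly what justifies the explicit post-loop test, and it is the single point where a careless argument could go astray. The remaining steps, namely the conditioning on $L$ and the identification of $\E[1/L]$ with $p$, are routine.
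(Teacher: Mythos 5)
Your proposal is correct and follows essentially the same route as the paper's proof: both establish that a trial succeeds if and only if $L \leq 1/Q$ (independently of $\pi$), then use the uniformity of $Q$ to get conditional success probability $1/L$, and conclude by taking the expectation $\E[1/L] = p$. Your treatment is simply more explicit about the lazy-sampling consistency and the boundary case where the last clause pushes $\ell$ past $1/Q$, details the paper compresses into one sentence.
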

\begin{proof}
Suppose that, in a given  trial, we condition on the chosen clause $C_s$ and the variable assignment $\nu$, with $L \geq 1$ satisfied clauses. The running counter $\ell$ will eventually cover all the true clauses, unless it aborts early. So the trial succeeds if and only if $L \leq 1/Q$. Since $Q$ is a uniform random variable, this has probability precisely $1/L$. 

Taking the expectation over $\nu$, the success probability is $\E[1/L] = p$.
\end{proof}

As a result, we can use standard results and concentration bounds to analyze the algorithm performance.  
\begin{theorem}
\label{result1}
 For any $\eps,\delta \in (0,3/4)$, Algorithm 2 provides an $(\eps,\delta)$ approximation for $\mu$, and the expected value of $N$ is $\Theta( \frac{\log(1/\delta)}{p \eps^2} )$.
Furthermore, these bounds all hold even conditioned on an arbitrary choice of permutation $\pi$.
\end{theorem}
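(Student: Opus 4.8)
The plan is to condition throughout on an arbitrary fixed permutation $\pi$ and reduce everything to the distribution of the stopping time $N$. Since each trial reuses $\pi$ but otherwise draws fresh randomness (the clause $C_s$, the variate $Q$, and the variable samples), the trials are conditionally independent given $\pi$, and by Lemma~\ref{gga} each succeeds with probability exactly $p$. Hence, conditioned on $\pi$, the algorithm runs independent Bernoulli($p$) trials until the $T$-th success, so $N$ is negative binomial with parameters $T$ and $p$, and its law does not depend on $\pi$. The expected-value claim is then immediate: $N$ is a sum of $T$ independent geometric waiting times of mean $1/p$, so $\E[N]=T/p$, and combining with $T=\Theta(\log(1/\delta)/\eps^2)$ from Observation~\ref{ggb} gives $\E[N]=\Theta(\tfrac{\log(1/\delta)}{p\eps^2})$.

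For the approximation guarantee, I would first note that $\hat\mu=\rho(\Phi)\hat p$ and $\mu=p\,\rho(\Phi)$, so the event $\mu(1-\eps)\le\hat\mu\le\mu(1+\eps)$ is identical to $p(1-\eps)\le T/N\le p(1+\eps)$, which fails in exactly two ways: an overestimate $T/N>p(1+\eps)$, i.e. $N<\tfrac{T}{p(1+\eps)}$, and an underestimate $T/N<p(1-\eps)$, i.e. $N>\tfrac{T}{p(1-\eps)}$. Using the exact negative-binomial/binomial duality $\{N\le k\}=\{S_k\ge T\}$ for integers $k$, where $S_k\sim\mathrm{Bin}(k,p)$ counts successes in the first $k$ trials, I would translate the overestimate event into $\Pr(S_{k_1}\ge T)$ with $k_1=\lfloor \tfrac{T}{p(1+\eps)}\rfloor$ and the underestimate event into $\Pr(S_{k_2}\le T-1)$ with $k_2=\lfloor\tfrac{T}{p(1-\eps)}\rfloor$.

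Next I would apply the Poisson-form Chernoff bound $\Pr(S_k\ge t)\le(\tfrac{e\mu}{t})^t e^{-\mu}$ for $t\ge\mu$ and its lower-tail analogue $\Pr(S_k\le t)\le(\tfrac{e\mu}{t})^t e^{-\mu}$ for $t\le\mu$, where $\mu=kp$. For the overestimate, setting $t=T$ and $\mu_1=k_1 p\le\tfrac{T}{1+\eps}$, the bound $(\tfrac{e\mu_1}{T})^T e^{-\mu_1}$ is increasing in $\mu_1$ on $[0,T]$, so over the feasible range it is maximized at $\mu_1=\tfrac{T}{1+\eps}$, where it evaluates to exactly $\bigl(\tfrac{e^{\eps/(1+\eps)}}{1+\eps}\bigr)^T$; here the floor only helps. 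Summing the two tail bounds then matches the two summands in the definition of $T$ on Line~1, which is chosen so their sum is at most $\delta$, giving the $(\eps,\delta)$ guarantee conditioned on $\pi$ and hence unconditionally.

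The main obstacle will be the underestimate side, where integrality is genuinely delicate rather than merely helpful. There the floor makes $\mu_2=k_2 p$ as small as $\tfrac{T}{1-\eps}-p$, and the lower-tail bound is \emph{decreasing} in $\mu_2$, so a smaller mean weakens it; moreover the duality produces the threshold $t=T-1$ rather than $T$. I would show these two effects compensate: evaluating the lower-tail bound at $t=T-1$ rather than $t=T$ contributes a slack factor of order $(1-\eps)$, while shifting $\mu_2$ down by at most $p\le 1$ costs a factor of order $e^{\eps}$, and since $e^{\eps}(1-\eps)\le 1$ for $\eps\ge 0$ the product still lies below the clean target $\bigl(\tfrac{e^{-\eps/(1-\eps)}}{1-\eps}\bigr)^T$. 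Verifying this interplay carefully, and confirming that $\eps,\delta\in(0,3/4)$ keeps all the relevant logarithms and $(1\pm\eps)$ factors well-defined (as in Observation~\ref{ggb}), is the one place where more than routine bookkeeping is required.
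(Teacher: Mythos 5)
Your proposal is correct and takes essentially the same route as the paper's own proof: condition on $\pi$, use Lemma~\ref{gga} to see that $N$ is negative binomial with parameters $T$ and $p$ (giving $\E[N]=T/p$ with Observation~\ref{ggb}), convert the over- and under-estimate events into binomial tail events via the fixed-trial-count duality, and apply Chernoff bounds whose sum is at most $\delta$ by the choice of $T$. The only difference is one of emphasis: where the paper compresses the floor/threshold bookkeeping into a terse appeal to ``monotonicity properties of $F^{-}$,'' you explicitly work out the compensation $e^{\eps}(1-\eps)\le 1$ on the underestimate side, which is precisely the content of the step the paper leaves implicit.
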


\begin{proof}
By Lemma~\ref{gga}, the final value $N$ is a negative-binomial random variable with $T$ successes and probability $p$. By Observation~\ref{ggb}, we have $T = \Theta( \frac{\log(1/\delta)}{\eps^2} )$. So $\E[N] = T/p = \Theta( \frac{\log(1/\delta)}{p \eps^2} )$.

To show that $\hat p$ satisfies the PAC bounds, let us first show $$
\Pr\big( \hat p < p (1 - \eps) \big)< \Big(\frac{e^{\frac{-\eps}{1-\eps}}}{1-\eps}\Big)^T.
$$

Instead of running the algorithm for a fixed number of successes $T$, 
suppose we ran it for a fixed number of trials $\tilde N = \big \lfloor \frac{T}{p (1-\eps)} \big \rfloor$; 
let $\tilde Y$ denote the resulting number of successes at that point. So $\tilde N$ is a scalar quantity, 
and $\tilde Y$ is a binomial random variable with success rate $p$. Observe that we have $\hat p < p(1-\eps)$ 
if and only if $N > \tilde N$, which in turn holds if and only if $\tilde Y \leq T-1$. This has probability at most 
$F^{-} ( \tilde N p, T - 1)$,  
where $F^{-}(x,y) = e^{y-x} (y/x)^y$ is the Chernoff lower-tail 
bound that a sum of 
independent variables with mean $x$ is at most $y$. 

Via monotonicity properties of $F^{-}$ and the fact that $\tilde N \geq \frac{T}{p(1-\eps)} - 1$, this is at most $$
F^{-} \Bigl( \frac{T}{1-\eps}, T \Bigr) = \Big(\frac{e^{\frac{-\eps}{1-\eps}}}{1-\eps}\Big)^T.
$$

A symmetric argument with Chernoff upper-tail bounds gives 
\[\Pr\big( \hat p > p (1 + \eps) \big)< \Big(\frac{e^{\frac{\eps}{1+\eps}}}{1+\eps}\Big)^T.\]

By a union bound, the overall probability 
that $p(1-\eps) \leq \hat p \leq p (1+\eps)$ is at least $1 - \bigl(\frac{e^{\frac{-\eps}{1-\eps}}}{1-\eps}\bigr)^T - \bigl(\frac{e^{\frac{\eps}{1+\eps}}}{1+\eps}\bigr)^T$, 
which is at least $1 - \delta$ by specification of $T$.
\end{proof}

We next show the complexity bounds for the algorithm. 
\begin{theorem}
\label{theorem2}
Let all values $\rho(v)$ be in the range $[b, 1-b]$ for $b \in (0,1/2]$.  Then each trial of Algorithm 2 has expected work $O\big(\frac{m w p}{\beta} \log(2/p) \log^2(1/b)\big)$ and expected randomness complexity $O\big(\frac{m p}{\beta} \log(2/p) \log(1/b)\big)$.
\end{theorem}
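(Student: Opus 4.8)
The plan is to condition on the sampled clause $C_s$ and the (lazily drawn) assignment $\nu$, so that the set of satisfied clauses is fixed with $L \geq 1$ elements, and then to bound the work and randomness of the inner loop as a sum over the clauses it actually processes. Since each processed clause $C$ costs $O(W(C))$ to evaluate (sampling its unassigned variables and checking satisfaction), the per-trial work is $O(\sum_{C \text{ processed}} W(C))$, while the randomness is $O(\sum_{C \text{ processed}} (\text{newly sampled variables in } C))$; the factors $\log(1/b)$ and $\log^2(1/b)$ will be tracked through the entropy-optimal sampling cost and the width--truth-probability relationship discussed below.

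First I would pin down how many satisfied clauses the loop seeks before aborting. By the stopping rule, the loop processes clauses until it has encountered $t := \min(\lfloor 1/Q \rfloor,\, L-1)$ satisfied clauses beyond $C_s$ (or exhausts $\Phi \setminus \{C_s\}$). Although $\E[1/Q] = \infty$, the cap at $L-1$ true clauses tames this: using $\Pr[\lfloor 1/Q\rfloor \ge i] = 1/i$, a short computation gives $\E_Q\big[\sum_{i=1}^{t} \tfrac{1}{L-i}\big] = \sum_{i=1}^{L-1}\tfrac{1}{i(L-i)} = \tfrac{2 H_{L-1}}{L} = O\!\big(\tfrac{\log L}{L}\big)$, which is the quantity that will ultimately control the total work.

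The crux is a per-\emph{epoch} bound, where an epoch is the run of steps between two consecutive true-clause discoveries. Here the blending rate $\beta$ and the width-weighting $\min\{1, v'/w'\}$ of Procedure P1 do the real work. Conditioned on the remaining clause set $\mathcal C$ (with $S$ true clauses), at each step the random branch fires with probability $\beta \min\{1, v'/w'\} = \beta v'/w'$ and selects a uniform remaining clause, which is true with probability $S/|\mathcal C|$; meanwhile the expected width of the selected clause is at most $2v' \le 2w'$, since the heuristic branch always takes the minimum-width remaining clause. Dividing, the success probability per unit of work is at least $\tfrac{\beta (v'/w')(S/|\mathcal C|)}{2v'} = \tfrac{\beta S}{2\,W(\mathcal C)} \ge \tfrac{\beta S}{2mw}$. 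I would then argue, via a Wald/optional-stopping argument on the accumulated work, that the expected work to find the next true clause is $O(mw/(\beta S))$. Summing over epochs $i = 1, \dots, t$ (where $S = L-i$) and using the computation of the previous paragraph yields expected work $O\big(\tfrac{mw}{\beta} \cdot \tfrac{\log L}{L}\big)$ conditioned on $\nu$.

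Finally I would remove the conditioning on $\nu$. Since $x \mapsto x \log(1/x)$ is concave and $p = \E_\nu[1/L]$ by Lemma~\ref{gga}, Jensen's inequality gives $\E_\nu\big[\tfrac{\log L}{L}\big] \le p \log(1/p) \le p\log(2/p)$, producing the claimed leading factor $\tfrac{mwp}{\beta}\log(2/p)$. The randomness bound follows the same skeleton but sums newly sampled variables rather than literal checks: each variable is drawn at most once per trial and, under entropy-optimal sampling with $\rho(v) \in [b, 1-b]$, costs $O(\log(1/b))$ bits, which removes one factor of $w$ relative to the work and leaves a single $\log(1/b)$. I expect the main obstacle to be the rigorous per-epoch Wald argument: the permutation generated by P1 is adaptive and the per-step work is correlated with whether the step succeeds, so the clean "success-per-unit-work" heuristic must be converted into a supermartingale bound; the $\min\{1,v'/w'\}$ weighting is precisely what prevents a run of cheap, narrow, rarely-randomized clauses from destroying the bound, and carefully relating clause widths to truth probabilities (each literal shifting the log-probability by $\Theta(\log(1/b))$) is what introduces the residual $\log(1/b)$ factors.
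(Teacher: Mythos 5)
Your argument for the \emph{work} bound takes a genuinely different route from the paper's, and its core is sound. The paper never reasons about epochs: it uses deferred decisions to split the processed clauses into a random-step set $\mathcal C_r$ and a heuristic-step set $\mathcal C_h$, bounds $\E[W(\mathcal C_h)] \leq \E[W(\mathcal C_r)]/\beta$ by a per-step ratio argument (Proposition~\ref{aprop1}), and then bounds, for each fixed clause $C$, $\Pr(C \in \mathcal C_r) \leq \tfrac{2(1+\log L)}{L}$ by coupling the random steps with a uniformly random permutation (Proposition~\ref{aprop2}); summing over clauses and applying Jensen to $x \mapsto x(1+\log(1/x))$ finishes. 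Your per-epoch supermartingale bound $O(mw/(\beta S))$ plays exactly the combined role of those two propositions, and your computation $\E_Q[\sum_{i=1}^{t} \tfrac{1}{L-i}] = 2H_{L-1}/L$ is the counterpart of the paper's $\E[\tfrac{2(1+\log L)}{L}]$. The optional-stopping step you flag as the main obstacle does go through: within an epoch $\ell$ is constant so no abort can occur mid-epoch, the epoch has at most $m$ steps, the stepwise inequality $\E[X_j \given \mathcal F_{j-1}] \leq \tfrac{2mw}{\beta S} \Pr(Z_j = 1 \given \mathcal F_{j-1})$ holds, and $Q$ is independent of the clause-selection randomness given $\nu$. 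Two repairable omissions: (i) you never account for the work after the last true clause when the trial is destined to succeed --- e.g.\ when $L=1$ the loop always scans all of $\Phi$ --- which adds $\Pr(1/Q \geq L)\cdot O(mw\log^2(1/b)) = O(mw \log^2(1/b)/L)$ per trial, hence $O(mwp\log^2(1/b))$ after averaging over $\nu$ (inside the bound, but it must be said); (ii) your equality $\beta\min\{1,v'/w'\} = \beta v'/w'$ and the inequality $v' \leq w'$ hold only because $\pi_{\text{h}}$ is sorted by increasing width, so this assumption should be made explicit.

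The genuine gap is the randomness bound. Your stated mechanism --- ``each variable is drawn at most once per trial'' --- is not what removes the factor of $w$, and it cannot: deduplication only caps the number of draws at $n$, or at the width-sum of processed clauses, both of which are far larger than $\tfrac{mp}{\beta}\log(2/p)$ when $p$ is small. The ingredient you are missing is the paper's entropy argument for short-circuit clause evaluation (Lemmas~\ref{comb-lemma} and~\ref{aprop3}): sampling a clause's unassigned literals stops at the first false one, the probability of reaching the $i$-th literal is $\prod_{j<i} q_j \leq (1-b)^{i-1}$, and hence the expected entropy consumed per processed clause is $\sum_i H(q_i)\prod_{j<i} q_j = O(\log(1/b))$, \emph{independent of the clause width}; the entropy-optimal sampler run over this stopped sequence then uses $O(\log(1/b))$ expected bits per clause. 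This is what makes randomness scale with the number of processed clauses rather than their total width. A per-variable accounting cannot be patched cheaply either: even granting short-circuiting, the expected number of variables revealed per clause is $O(1/b)$, not $O(1)$, so charging $O(\log(1/b))$ bits per variable gives an extra $1/b$ factor relative to the claimed $O\big(\tfrac{mp}{\beta}\log(2/p)\log(1/b)\big)$, which matters since $b$ is not assumed constant in this theorem. The same lemma is also what justifies the $O(W(C)\log^2(1/b))$ expected time per clause that your work bound defers to ``tracking the entropy-optimal sampling cost,'' so it is needed in both halves of your proof, not just one.
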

\begin{proof}
See Appendix.
\end{proof}

We now prove Theorem 1.

\begin{proof}[Proof of Theorem~\ref{main-analysis-theorem}]
Let us first show the bound on work. Let $S$ denote the work of a given trial; note that, when  $\pi$ is fixed, all trials are independent and identically distributed. By Wald's equation (\cite{wald})
the overall expected work conditional on $\pi$ is $\E[N \given \pi] \cdot \E[S \given\pi]$. 
By Theorem~\ref{result1} we have $\E[N \given \pi] = O( \frac{\log(1/\delta)}{p \eps^2} )$. So the overall expected work, conditional on $\pi$, 
is $O \bigl( \frac{\E[S \given \pi] \log(1/\delta)}{p \eps^2} \bigr)$. Now take the expectation over $\pi$ and note that Theorem~\ref{theorem2} gives $\E[S] \leq O( m w p \log(2/p))$ for constant $b, \beta$.

The bound on randomness complexity is completely analogous, except that we also require $O(m \log m)$ random bits to generate $\pi$ in algorithm P1, and 
we note that the maximum amount of randomness in each trial is $O(n + \log m)$ to choose $C_s$ and sample each variable.
\end{proof}

For sake of completeness, we also state a formal result for the complexity of L-KLM. The analysis is very similar to our algorithm so we omit the proofs.
\begin{theorem}
\label{main-analysis2}
Let $\eps, \delta \in (0,3/4)$ and all values $\rho(v)$ be in the range $[b, 1-b]$ for a constant $b \in (0,1/2]$. 
Then L-KLM has expected work $O( \frac{m w \log(1/\delta)}{\eps^2})$ and expected randomness complexity $O( \frac{\min\{ n/p, m \log m \} \log(1/\delta)}{\eps^2}).$
\end{theorem}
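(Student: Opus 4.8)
The plan is to mirror the analysis of Algorithm~2 (Theorems~\ref{result1} and~\ref{theorem2}), exploiting that in \newKLM the outer-loop trials are again independent and identically distributed once $\Phi$ is fixed. First I would record the per-trial law, which is the \newKLM analog of Lemma~\ref{gga}: conditioning on the sampled clause $C_s$ and the lazily realized assignment $\nu$ with $L \geq 1$ satisfied clauses, the inner \textbf{repeat} loop samples clauses uniformly with replacement until it hits a true one, so its waiting time $W$ (the number of inner steps) is geometric with success probability $L/m$, giving $\E[W \mid L] = m/L$. Taking the expectation over $\nu$ and using $p = \E[1/L]$ yields $\E[W] = m\,\E[1/L] = mp$. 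This identity is what underlies the estimator $\hat p = Y/(Nm)$ and drives all the counts below.

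Next I would pin down the loop counts. The threshold is $T = \Theta(\frac{m\log(1/\delta)}{\eps^2})$ (read directly off Line~1; note this is a factor $m$ larger than the $\Theta(\frac{\log(1/\delta)}{\eps^2})$ of Observation~\ref{ggb} for Algorithm~2). Since $Y = \sum_{j=1}^N W_j$ is a sum of i.i.d.\ waiting times stopped the first time it reaches $T$, Wald's equation gives $\E[Y] = \E[N]\,\E[W] = \E[N]\,mp$. The overshoot is at most one trial, so $T \le \E[Y] \le T + \E[W]$; because $mp \le m \ll T$ we get $\E[Y] = \Theta(T)$ and hence $\E[N] = \Theta(T/(mp)) = \Theta(\frac{\log(1/\delta)}{p\eps^2})$, matching the form seen in Theorem~\ref{result1}.

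For the work bound I would again invoke Wald, now over the i.i.d.\ trials. Within a trial the work is $O(\sum_i W(C_{k_i}))$, a sum over the uniformly sampled clauses of their widths; since these widths are i.i.d.\ with mean $w$ and $W$ is a stopping time, $\E[\text{work per trial}] = O(w)\,\E[W] = O(wmp)$. Multiplying by $\E[N]$ gives expected work $O(\frac{mw\log(1/\delta)}{\eps^2})$ — and, unlike Algorithm~2, there is no $\log(2/p)$ factor, since \newKLM has no short-circuit cutoff. The randomness bound is structurally identical but must separate two contributions. Each of the $\approx T$ uniform clause selections costs $O(\log m)$ bits, so clause selection alone contributes $\E[N]\cdot O(mp\log m) = O(\frac{m\log m\,\log(1/\delta)}{\eps^2})$; this is precisely the term that the fixed permutation of Algorithm~2 eliminates. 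The variable sampling is bounded per trial either by the lazy-deduplication cap of $n$ variables or by the expected total sampled width $O(wmp)$, with each variable costing $O(\log(1/b)) = O(1)$ bits for constant $b$; multiplied by $\E[N]$ this contributes $\min\{n/p,\,mw\}\cdot\frac{\log(1/\delta)}{\eps^2}$. Retaining the dominant contributions (and absorbing the second-order terms omitted in Table~2) yields the stated $O\bigl(\frac{\min\{n/p,\,m\log m\}\log(1/\delta)}{\eps^2}\bigr)$.

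I expect the randomness accounting to be the main obstacle. The work and loop-count parts are routine once the waiting-time identity $\E[W]=mp$ is in hand, but for the randomness one must carefully isolate the per-step clause-selection cost (the qualitative difference from Algorithm~2, whose reused permutation avoids any per-step selection) from the lazily-sampled variable cost, and justify the $\min$ through the two complementary per-trial bounds — deduplication against the variable count $n$ versus the summed clause width $O(wmp)$ — before combining with $\E[N]$.
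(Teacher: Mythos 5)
The paper never actually writes out a proof of Theorem~\ref{main-analysis2} (it says only that ``the analysis is very similar to our algorithm so we omit the proofs''), so your proposal can only be judged against the framework of Theorems~\ref{result1} and~\ref{theorem2}. Measured that way, your treatment of the trial structure and the work bound is essentially the intended argument and is correct: the waiting-time identity $\E[W] = m\,\E[1/L] = mp$, Wald's equation over i.i.d.\ trials, $\E[N] = \Theta(\log(1/\delta)/(p\eps^2))$, and the per-trial work $O(wmp)$ via Wald over the sampled clause widths. (Two minor slips: you silently drop the $O(W(C_s))$ cost of initializing the sampled clause, which is also $O(mwp)$ in expectation by the same argument as in the proof of Theorem~\ref{theorem2}; and the overshoot of $Y$ past $T$ is the waiting time of the trial in progress, which by the inspection paradox is \emph{not} bounded by $\E[W]=mp$ --- the clean bound is $\E[\text{overshoot}] \leq m$ by memorylessness, worst case $L=1$, which still gives $\E[Y]=\Theta(T)$ since $T \geq m$.)

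The genuine gap is the last step of the randomness bound. Your own accounting yields clause-selection cost $\Theta(T\log m) = \Theta\big(\tfrac{m\log m\,\log(1/\delta)}{\eps^2}\big)$ plus variable-sampling cost $O\big(\tfrac{\min\{n/p,\,mw\}\log(1/\delta)}{\eps^2}\big)$, i.e.\ a bound of the form $m\log m + \min\{n/p, mw\}$, and this is \emph{not} $O(\min\{n/p, m\log m\})$. When $n/p \ll m\log m$, the $m\log m$ term dominates and cannot be ``absorbed as a second-order term'': the total number of inner steps is always at least $T$, independent of $p$, so per-step uniform clause selection costs $\Theta\big(\tfrac{m\log m \log(1/\delta)}{\eps^2}\big)$ bits no matter how large $p$ is. Obtaining the stated $\min$ requires an idea your proposal never supplies: when $n/p$ is the smaller term, one must avoid simulating the clause draws one at a time --- for instance, realize $\nu$ eagerly (at most $O(n)$ bits per trial for constant $b$), determine $L$, and draw the trial's waiting time \emph{directly} as a geometric variate with success probability $L/m$ using an entropy-optimal sampler, so that per-trial randomness is $O(n + \log m) = O(n)$ and the total is $O\big(\tfrac{n\log(1/\delta)}{p\eps^2}\big)$. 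Note that this variant pays $\Theta(mw)$ work per trial to evaluate $L$, so reconciling it with the stated work bound (or exhibiting a single implementation achieving both bounds simultaneously) is precisely the content of the theorem that remains unproved in your write-up; asserting the $\min$ by ``retaining the dominant contributions'' is not a valid step.
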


\section{Data structures and implementation notes}
\label{data-struct-sec}
Here, we describe some of the lower-level implementation details for how we deal with the clauses in each step. 
We suppose that $\pi$ has been generated and is fixed throughout, and we write $C_i = \pi(i)$.

As a starting point, we will simply store all the clauses consecutively in order $C_1, C_2, \dots,$ so that iterating through the clauses in 
each step becomes a stride-one operation. The data structure used for sampling $C_s$ can be generated \emph{after} this step, so that we never need to use the original clause ordering afterward.

The next preprocessing step is to sort the variable indices so that the 
clause variable sequence $(C_1, C_2, \dots,)$ is 
lexicographically minimal. Concretely, if clause $\pi(1)$ has width $k_1$, then we relabel its variables to 
indices $1, \dots, k_1$. If clause $C_2$ contains $k_2$ variables disjoint from $C_1$, we relabel them $k_1 + 1, \dots, k_1 + k_2$. 
We also sort the variables within each clause. These reordering steps help with memory locality for the variables.

The variable reordering has a second effect that is more subtle, but very powerful, regarding the storage of $\nu$. 
Our algorithm, as in Pepin, maintains $\nu$ using two bit-arrays of size $n$, one for the value of each variable 
and a second to record which variables have already been assigned.  The second array must be reset in each trial. As discussed in \cite{pepin}, there are two methods that can be used for this: the first is to go through the assigned positions and zero out the bits one-by-one (the ``sparse" method.) The second is to use a single large-scale memory operation such as $\texttt{memset}$ to clear the array completely (the ``dense" method).

The sparse method respects the asymptotic bounds, but it is slow and requires additional data structures to track which variables have been assigned. The dense method is very fast, and can use built-in SIMD operations; the downside is that it is still an $O(n)$ operation for each trial, which essentially negates the benefit of lazy sampling. An optimized implementation of L-KLM should switch between the dense and sparse methods, depending on the size of $p$ and the capabilities of the underlying processor. In our experiments, we tried both implementations; we consistently found that the sparse method was faster for our benchmarks and we used it throughout for the L-KLM code.

Our Main Algorithm uses a simpler scheme with the benefit of both methods: whenever a trial is finished after some step $\ell \leq m$, we use $\texttt{memset}$ to 
zero out the variable assignment array in the contiguous block of positions $0, \dots, t$, where $t$ is the maximum variable index over clauses $C_1, \dots, C_{\ell}$. The latter can be be precomputed as part of our data structure. Critically, because of the variable reordering, the $\texttt{memset}$ has cost $O(t) \leq O(W(C_1) + \dots + W(C_{\ell}))$ (with a very small constant), matching the asymptotic analysis. We also zero the positions corresponding  to clause $C_s$ using a sparse bit-by-bit representation.

\section{Empirical evaluation and discussion}
\label{empirical-sec}
Due to a lack of standardized benchmarks, we generate a new family of synthetic formulas inspired by \cite{FPRAS} and \cite{GNN}. Each formula is randomly generated with a 
specified $n$ and $m$ beforehand, and has a small number of ``stems", i.e. a selection of randomly chosen variables
shared among many clauses. For each distinct clause, the amount of
variables added to its stem is randomly chosen. Finally, each variable 
in the clause is set according to $\rho$. Details are found in Procedure P2 below. Here, parameter $\alpha$ specifies the number of stems, $\gamma$ specifies the number of variables per stem, and $\lambda$ controls how many variables to add to a stem.

\begingroup
\begin{procedure}
    \caption{}
    \begin{algorithmic}
\REQUIRE Integers $n,m, \alpha,\gamma,\lambda>0.$
    \end{algorithmic}
    \begin{algorithmic}[1]
\STATE Initialize $\Phi \leftarrow \emptyset$.
    \WHILE{$|\Phi| < m$}
    \STATE Form a stem $C_{\sigma}$ by choosing $\gamma$ variables at random and assigning them according to $\rho$.
\FOR{$t = 1, \dots, m / \alpha$}
        \STATE Choose a width $w'$ uniformly at random in $\{1, \dots, \lambda \}$.
        \STATE $C\leftarrow C_\sigma$.
        \FOR{$i=1$ \TO $w'$}
            \STATE Randomly assign a randomly selected variable $v$ according to $\rho$.
            \STATE Add $v$ or $\neg v$ to $C$.
        \ENDFOR
        \STATE $\Phi\leftarrow\Phi\cup \{C \}$
        \ENDFOR
    \ENDWHILE
\RETURN DNF formula $\Phi$.
    \end{algorithmic}
\end{procedure}

Generating DNFs in this fashion is motivated by two primary reasons. First is that non-uniform width
more accurately represents problems which show up in practice, and second we want to maintain a significant value for $p$ to ensure
a similar computational hardness as that of the small uniform width case.

All experiments were conducted on a compute node consisting of a $4\times$ 
Intel Xeon Platinum 8260 CPUs with $4\times24$ 
real cores and 1TB of RAM.
All algorithms used a few GB, nowhere near 1TB.
The KLM and \newKLM algorithms were implemented in C++ and compiled with the O3 flag.  The Monte Carlo algorithms, 
including ours, all share the same support functions wherever applicable
and use the improved discrete sampling algorithm with Mersenne Twister randomization. 
We obtained the code for Pepin
from the author's Github (version 1.3).

All timing reported is in wall-clock time as opposed to CPU time; as we discuss shortly, we believe that this is more 
indicative of real algorithm performance. We chose $\beta=0.01$ for our algorithm and benchmarked on unweighted DNFs in all experiments.

\subsection{Scaling DNF size}

Our first experiment is to measure how the work scales with problem size. We generated DNFs with Procedure P2 for $n=2^{x}: x = 10, \dots, 20$, and parameters $$
m = n, \alpha = 2, \gamma = \lfloor \tfrac{1}{10} \log_2 m \rfloor, \lambda = \lfloor 2\log_2 m \rfloor.
$$
The results are shown in Figure~\ref{size}.

\begin{figure}[H]
    \centering
    \includegraphics[width=.6\linewidth]{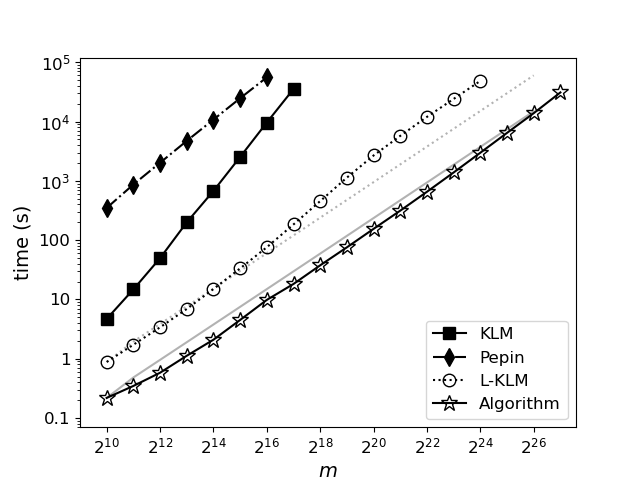}
    \caption{\label{size}Average work as problem size increases, $\eps = \delta = 0.05$.
        Light gray lines indicate an ideal scaling from $m=2^{10}$.}
\end{figure}

We see that KLM has the worst scaling of all methods, and that Pepin solved the fewest problems before the timeout of $1$ day.
The experiment demonstrates that our algorithm and \newKLM can solve problem sizes with $n > 10^6$ at $95$\% accuracy. 
Indeed, our algorithm's work at $n=2^{27}$ is comparable to KLMs work at $n=2^{17}$;
this indicates a longstanding barrier has been broken since Neural\#DNF increased feasible sizes to $n=\,$15,000
at $90$\% accuracy.
Our algorithm also provides an order of magnitude speedup compared to L-KLM.

\subsection{Scaling the PAC bounds}

\begin{figure}[ht]
    \centering
    \begin{subfigure}[b]{0.48\textwidth}
        \centering
        \includegraphics[width=\linewidth]{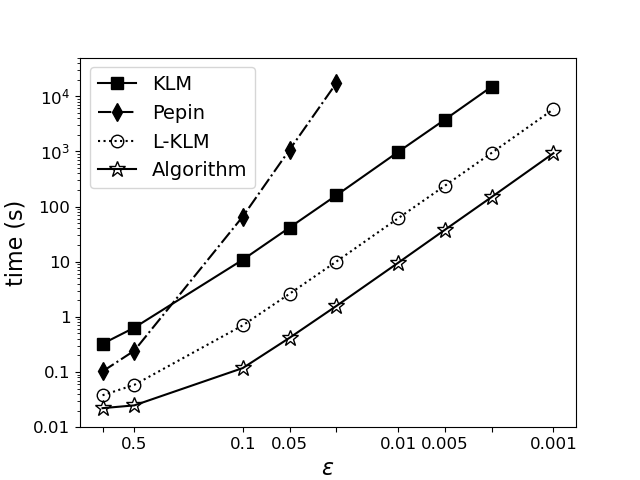}
        \caption{Average work as $\eps$ decreases from $0.8$ to $0.001$. $\delta=0.1$.}
        \label{fig:eps5000}
    \end{subfigure}
    \hfill
    \begin{subfigure}[b]{0.48\textwidth}
        \centering
        \includegraphics[width=\linewidth]{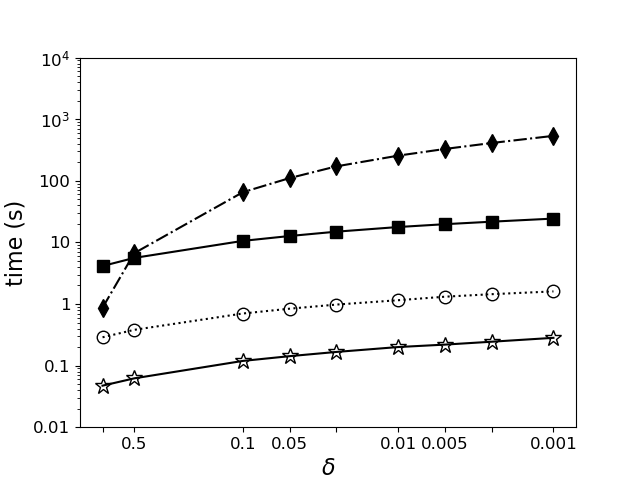}
        \caption{Average work as $\delta$ decreases from $0.8$ to $0.001$. $\eps=0.1$.}
        \label{fig:delta5000}
    \end{subfigure}
    \caption{\label{accSc}$(\eps,\delta)$-Scalability. $n=m=2^{12}$}
\end{figure}

For our next  experiment, we fixed $n = m = 2^{12}$, and fixed either $\eps$ or $\delta$ while the other varied over a wide range. 
See Figure~\ref{accSc}. We are pleased to note that our algorithm has much better scaling in $\eps, \delta$ compared to Pepin, and it
obtains $99.9\%$ accuracy in comparable time to KLM and Pepin's attainment of $99.0\%$ and $95.0\%$ accuracy respectively. This
indicates a $100 \times$ and $400 \times$ speedup over KLM and Pepin, which only grows as the problem size increases.

To test accuracy we generated $32$ DNFs with number of variables $n$ ranging from $4$ to $32$ in increments of $4$. For each
$n$, we allowed $m$ to range through $\{3,4,5,6\}\cdot n/4$. We fixed $\delta = 0.05$
and allowed $\eps$ to range in $[0.005,0.1]$.

\begin{figure}[ht]
    \centering
    \includegraphics[width=.7\linewidth]{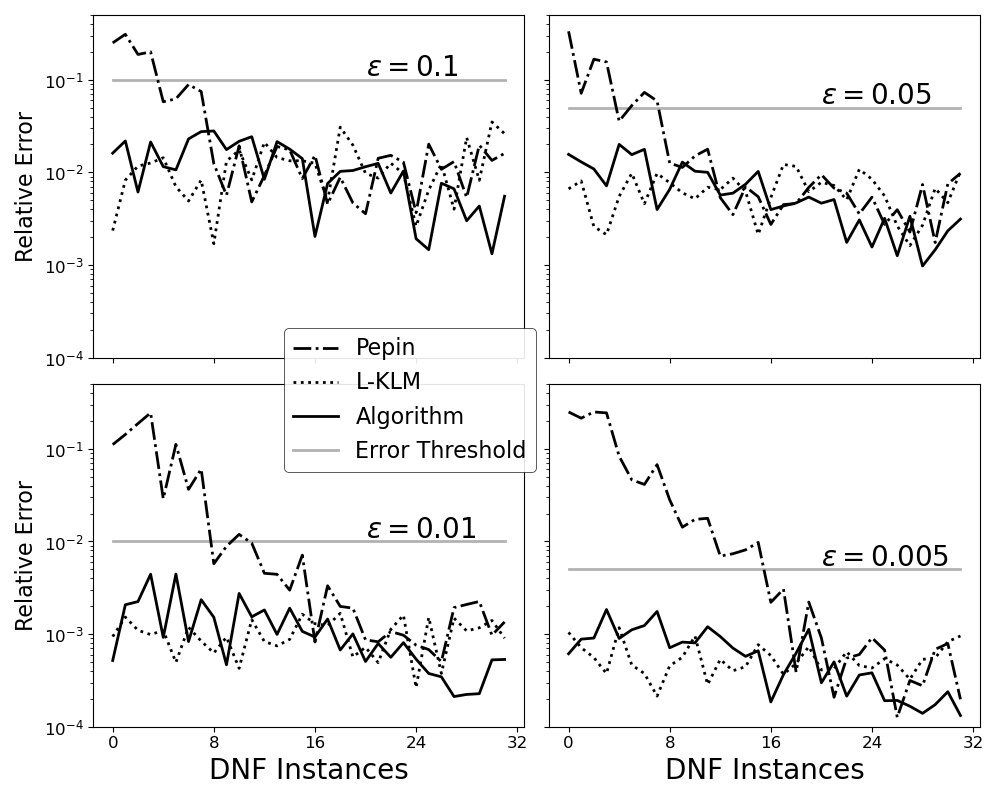}
    \caption{\label{acc}Accuracy Tests. $\delta = 0.05, \eps \in [0.005,0.1]$}
\end{figure}

Figure~\ref{acc} shows that all the methods provide the guaranteed accuracy, except for Pepin in smaller problem cases.
There seems to be some dependency between the size of the problem and $\eps$ for the accuracy guarantee in Pepin.
We observe that our algorithm returns a relative accuracy well below the guarantee regardless of problem size.

\subsection{Comparison with Neural\#DNF and previous benchmarks}
Next we compared our algorithm directly to reported Neural\#DNF running time and previous benchmark regimes in Figure~\ref{NN}. We generated 
DNFs with $n=\,$15,000 (the largest reported by Neural \#DNF) and $m=0.75n$, 
with various uniform widths $w = 3 ,5, 8, 13, 21, 34$ (the first 6 Fibonacci numbers). Such DNFs 
are somewhat artificial, with the undesirable property that $\mu \rightarrow 1$ as $n \rightarrow \infty$, 
but this matches methodology used in the experiments of \cite{FPRAS} and \cite{pepin} along with the 
parameter modifications in \cite{GNN}. 

\begin{figure}[ht]
    \centering
    \includegraphics[width=.6\linewidth]{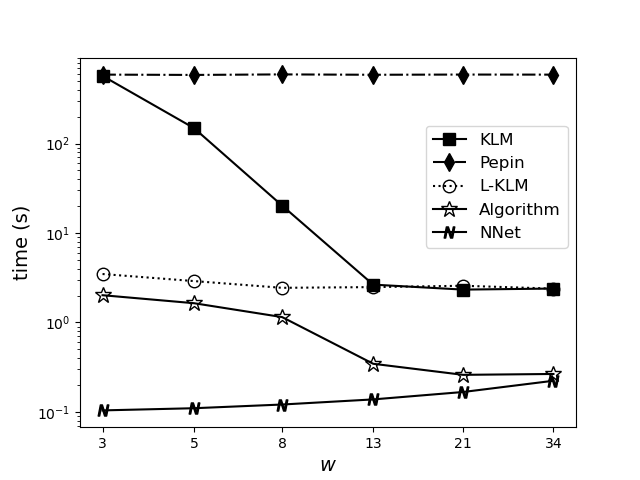}
    \caption{\label{NN} work comparison of the competing methods. $n=\,$15,000, \\$m=\,$11,250, $\eps=0.1$, and $\delta=0.05$.}
\end{figure}

Neither Pepin nor L-KLM is significantly affected by the width $w$, as noted in \cite{pepin}. 
We speculate that the discrepancy in performance between KLM and Pepin from previous publications is
due to the tighter PAC bound requirement. Loose bounds $\eps=0.8$ and $\delta=0.36$ were used to benchmark Pepin; the $\eps,\delta$ scaling observed from Figure~\ref{fig:eps5000} as well as the extra $\log(1/\eps)$ factors in the 
complexity bounds of Table 2 indicate that Pepin will become much slower for the tighter 
bounds $\eps=0.1, \delta=0.05$.

We mention that most previous benchmarks had reported CPU time, but we found that this misrepresents the 
actual time required. When we ran the above experiments, we took the mean of both
wall-clock and CPU time for each method to solve the six benchmark regimes. See Figure~\ref{fig:CPU}.

\begin{figure}[ht]
    \centering
    \begin{subfigure}[b]{0.48\textwidth}
        \centering
        \includegraphics[width=\linewidth]{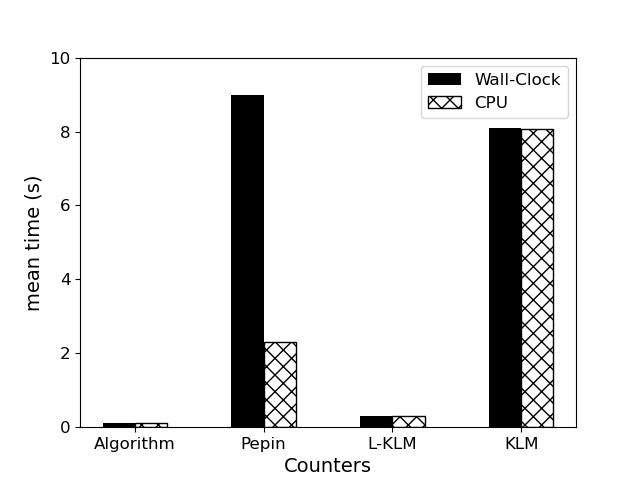}
        \caption{Medium PAC bounds. $\eps = 0.2,$\\ $\delta = 0.1$,
        $n=m=2^{13}$}
        \label{fig:cA}
    \end{subfigure}
    \begin{subfigure}[b]{0.48\textwidth}
        \centering
        \includegraphics[width=\linewidth]{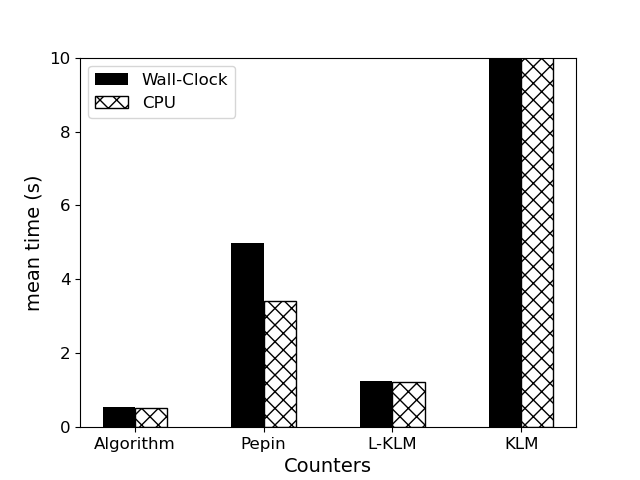}
        \caption{Loose PAC bounds. $\eps = 0.5,$\\ $\delta = 0.25$,
        $n=m=2^{17}$.}
        \label{fig:cO}
    \end{subfigure}
    \caption{Mean running times for the two regimes on the six different scenarios of uniform width in \{3,5,8,13,21,34\}.}
    \label{fig:CPU}
\end{figure}

We see that Pepin spends a large proportion of its time
waiting on the kernel and that this percentage increases as $\eps$ and $\delta$ decrease. 
We speculate this may be due to the larger
memory and randomness requirements inherent in Pepin. This provides good reason for us to record our benchmark in wall-clock
time.

For complete transparency, we include Figure~\ref{overlap0} comparing the values of $p$ and $\mu$ from previous benchmarks.

\begin{figure}[ht]
    \centering
    \begin{subfigure}[b]{0.48\textwidth}
        \centering
        \includegraphics[width=\linewidth]{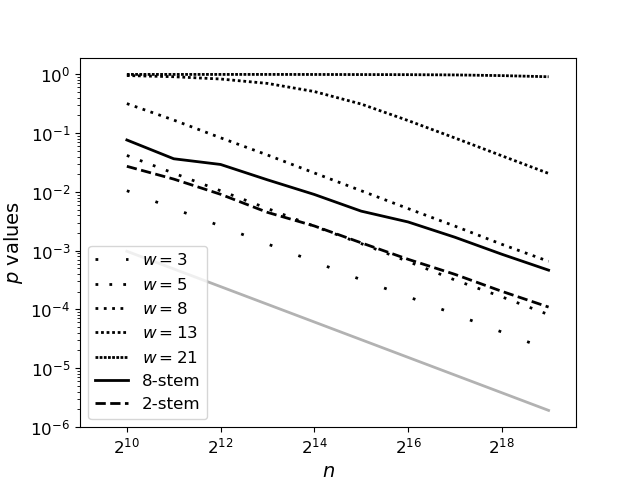}
        \caption{Comparison of $p$ value}
        \label{overlap1}
    \end{subfigure}
    \hfill
    \begin{subfigure}[b]{0.48\textwidth}
        \centering
        \includegraphics[width=\linewidth]{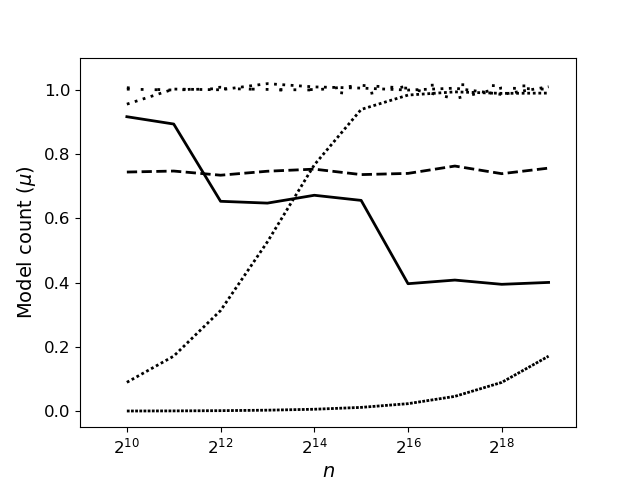}
        \caption{Comparison of model count}
        \label{count1}
    \end{subfigure}
    \caption{\label{overlap0}Comparison of benchmarks $p$ and $\mu$.}
\end{figure}

\subsection{Discussion on algorithm performance}

Overall, the experimental results show that our algorithm has a faster running time than all other FPRASs, and is strikingly close to the Neural\#DNF running times.  The fact that
we outperform all other FPRASs even on uniform-width DNFs and on small problem sizes, 
where our permutation heuristic provides little to no benefit, is
a testament to the robustness of our approach.

There appear to be a number of distinct reasons for the improved perfomance, with different affects on different types of datasets and different parameter regimes. Here we highlight some factors that we believe explain many of the advantages of our algorithm.

\paragraph{Random sampling.} The inner loops of KLM and L-KLM sample $\Theta(\log m)$ random bits per step, to choose the next clause. 
By contrast, in our algorithm, the only random samples are needed for the variables. 
Because we can use short-circuit evaluation of each clause, this only requires $O(1)$ random bits in expectation.
Consequently, our algorithm uses much less randomness compared to KLM and L-KLM. 
While theoretical analysis of algorithms typically assumes access to a random tape in $O(1)$ time, 
in practice generating (pseudo)-random values can be much more expensive than other arithmetic operations. 

In Figure~\ref{randomfig}, we generated DNFs with procedure P2 and chose $\alpha = 8$, $\gamma = \lfloor \tfrac{1}{4}\log_2 m
 \rfloor$, and $\lambda = \lfloor 2\log_2 m \rfloor$ (this is done so that $p$ isn't as extreme compared to
the previous experiments.) We see the significantly reduced and slow growing random number generation in the algorithm.

\begin{figure}[H]
    \centering
    \includegraphics[width=.6\linewidth]{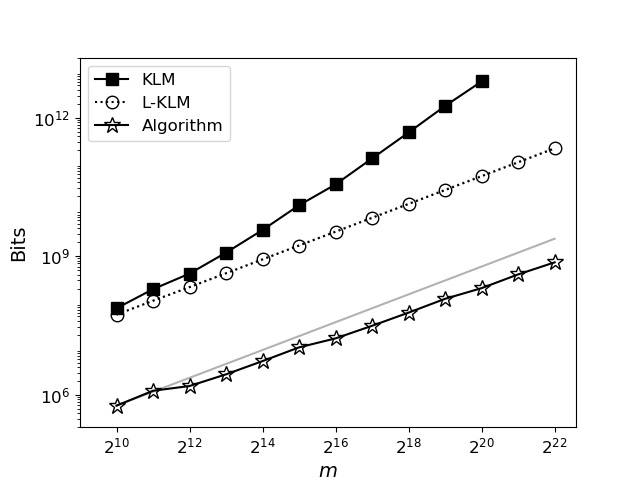}
    \caption{\label{randomfig}Number of random bits needed as problem size increases. $\eps = 0.1$, $\delta = 0.5$, and $n=m$.
        Light gray lines linear scaling from $m=2^{10}$.}
\end{figure}

\paragraph{Clause selection heuristic.} One key feature of our algorithm is that the permutation $\pi$ can use heuristics to choose the clause, e.g. by preferentially selecting small clauses. In our asymptotic analysis, we have ignored any potential benefits from the clause-selection heuristic; it is simply taken as a possible surprise bonus in work. As we see in Figure~\ref{beta}, the clause selection with $\beta = 0.01$ provides almost all of the theoretical benefits of the random permutation on pathological instances, but also essentially matches the greedy clause-selection heuristic on typical instances.
\begin{figure}[ht]
    \centering
    \includegraphics[width=\linewidth]{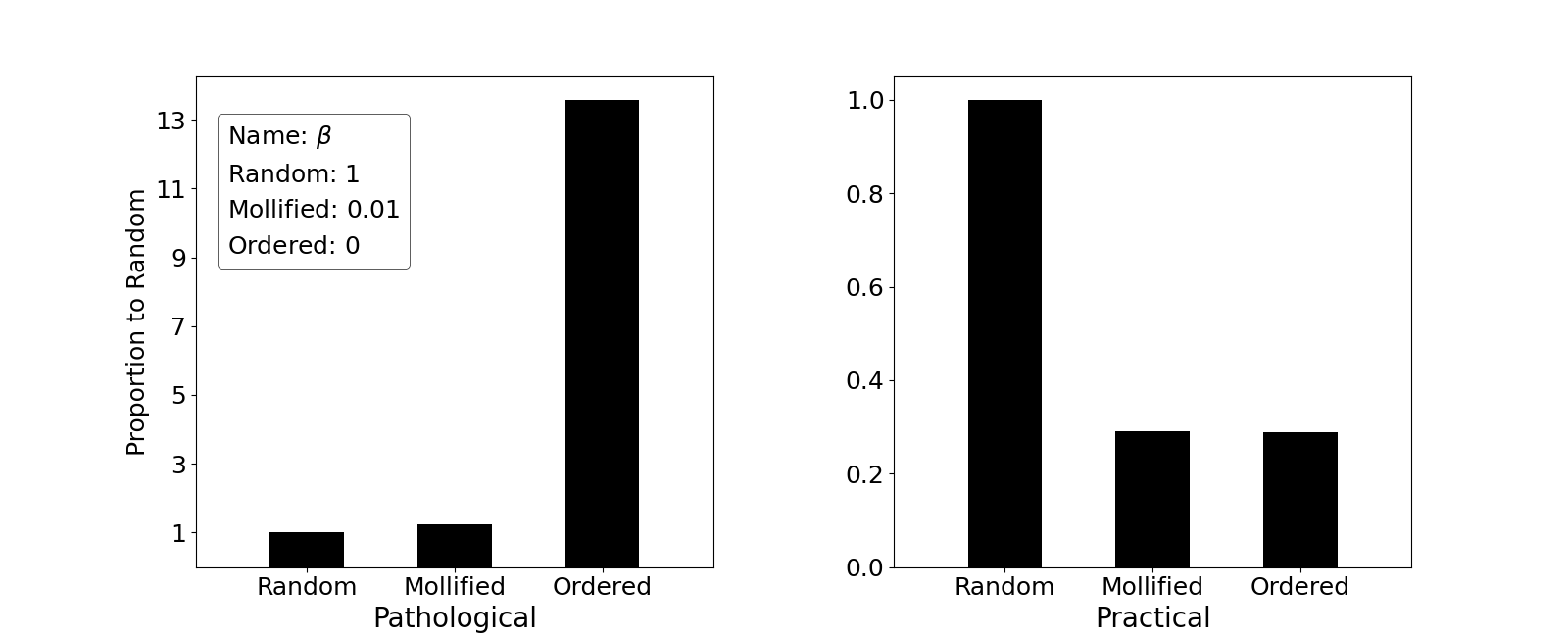}
    \caption{\label{beta}Parameter selection of $\beta$. Pathological case has half of the clauses a given width
        and the rest are width+1; furthermore, clauses of different widths are never simultaneously true. Practical case is the same
        as other benchmarks.}
\end{figure}

\paragraph{Memory access pattern.} The inner loop of our Main Algorithm is to step through the clauses in a fixed order given by the permutation $\pi$. By contrast, in KLM and L-KLM, the clauses must be processed in a \emph{uniformly random} order. This is one of the worst access patterns for large memory. While the usual asymptotic conventions assume that memory access takes $O(1)$ time, real-world memory times typically vary depending on access locality.

The situation for Pepin is even worse: while KLM, L-KLM and our Algorithm handle each trial separately, Pepin must effectively store them all together. As a result, the memory usage scales with problem size as well as $\eps$ and $\delta$. For small values of $\eps$, in particular, the working memory may exceed the available low-level memory caches.  This can dramatically inflate the running time, and is another reason why our algorithm can be so much faster compared to Pepin.

\paragraph{Variable ordering.}  A side effect of the consistent clause order is that we can use more compact data structures to represent the variables. This allows further speedups. It also allows us to simplify the code, since we do not need to switch between sparse and dense representations to clear the variables.

\paragraph{Tight PAC Bounds.} In both KLM and our algorithm, the average number of trials is $\Theta( \frac{\log(1/\delta)}{p \eps^2} )$. However, the constant factor for KLM is roughly $4 \times$ larger. The main reason for this is that our stopping threshold is defined in terms of Binomial random variables and a Chernoff bound; this is similar to, but slightly tighter than, the generic stopping rule of \cite{optimalMC}. By contrast, the threshold for KLM requires more generic bounds in terms of optional stopping for martingales; note that the stopping rule of \cite{optimalMC} would not be valid for the KLM algorithm, since the relevant  statistic for the KLM algorithm (the waiting time to observe a true clause) is not bounded.

We also mention that the concentration bounds given for Pepin may be erroneous, as shown in Figure 3.

\appendix

\section{Proof of Theorem~\ref{theorem2}}
Since the trials are i.i.d. given $\pi$, it suffices to consider just the first trial. 
By the principle of deferred decisions, suppose that we are generating $\pi$ step-by-step in an online fashion. 
We choose each clause $\pi(i)$ as we encounter it either, uniformly at random or deterministically from $\pi_{\text{h}}$. Let us refer 
to these as ``random" or ``heuristic" steps respectively, and we denote by $\mathcal C_r$ and $\mathcal C_h$ the set of clauses checked in random and heuristic steps. 

We do not count $C_s$ for either of the sets. The loop terminates as soon as the total number of true clauses aside from $C_s$ exceeds $1/Q-1$. Thus, $\mathcal C_r \cup \mathcal C_h$ may be much smaller than the full set of clauses $\Phi$.

\begin{proposition} 
\label{aprop1}
There holds
$$
\E[ |\mathcal C_h| ] \leq \E[ |\mathcal C_r| ] / \beta, \qquad \text{and} \qquad \E[W(\mathcal C_h) ] \leq \E[ W(\mathcal C_r) ]/\beta.
$$
\end{proposition}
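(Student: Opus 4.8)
The plan is to analyze the online (deferred-decision) generation of $\pi$ step by step, controlling the per-step expected contributions to $\mathcal C_r$ and $\mathcal C_h$ conditional on the history. Let $\mathcal F_{t-1}$ denote the $\sigma$-field generated by the first $t-1$ steps of the trial, and let $S$ be the (random) step at which the trial aborts. The first key observation is that $\{S \ge t\}$ --- the event that step $t$ actually occurs --- is $\mathcal F_{t-1}$-measurable, since aborting is triggered only by the truth values of clauses already processed. Consequently, for any nonnegative per-step increment I may write $\E[\cdot] = \sum_{t \ge 1} \E\big[\mathbf{1}_{\{S \ge t\}} \cdot \E[(\cdot)_t \mid \mathcal F_{t-1}]\big]$, the interchange being justified because all increments are nonnegative and bounded by $W(\Phi)$. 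It remains to compute the conditional per-step expectations. At step $t$, writing $v'_t$ and $w'_t$ for the width of the candidate heuristic clause and the average width of the surviving set $\mathcal C_t$, the step is random with probability $\beta \min\{1, v'_t/w'_t\}$; given a random step the chosen clause is uniform over $\mathcal C_t$ and so has expected width $w'_t$, while given a heuristic step the clause has width $v'_t$. Because $\pi_{\text h}$ orders clauses by increasing width, the candidate $\pi_{\text h}(k)$ is always the narrowest remaining clause, so $v'_t \le w'_t$ and $\min\{1, v'_t/w'_t\} = v'_t/w'_t$.

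For the width inequality this setup closes immediately. The expected width added to $\mathcal C_r$ at a live step is $a_t = \beta (v'_t/w'_t)\, w'_t = \beta v'_t$, whereas the expected width added to $\mathcal C_h$ is $b_t = (1 - \beta v'_t/w'_t)\, v'_t \le v'_t$. Hence $a_t \ge \beta b_t$ pointwise; multiplying by $\mathbf{1}_{\{S \ge t\}}$, summing over $t$, and taking expectations yields $\E[W(\mathcal C_r)] \ge \beta\, \E[W(\mathcal C_h)]$, which is the second claim.

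The count inequality is where I expect the real difficulty. The same bookkeeping gives expected per-step counts $\beta v'_t/w'_t$ into $\mathcal C_r$ and $1 - \beta v'_t/w'_t$ into $\mathcal C_h$, so a termwise domination would require $\beta v'_t/w'_t \ge \beta(1 - \beta v'_t/w'_t)$, i.e.\ $v'_t/w'_t \ge 1/(1+\beta)$. This need not hold step by step: the narrowest surviving clause can be far below the average width, driving the random-selection probability down and breaking the pointwise comparison. Decomposing each decision into a ``mode'' coin (heads with probability $\beta$) and a ``selection'' coin (heads with probability $v'_t/w'_t$), the count bound reduces exactly to showing that the expected number of live steps on which the mode coin is heads but the selection coin is tails is at most $\tfrac{\beta^2}{1+\beta}\,\E[S]$, which in turn amounts to the aggregate estimate $\E\big[\sum_{t \le S} v'_t/w'_t\big] \ge \tfrac{1}{1+\beta}\,\E[S]$. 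My plan is therefore to replace the failed termwise argument with an amortized one that controls this aggregate ratio directly, exploiting that each heuristic removal deletes the current minimum and hence raises the minimum-to-average ratio of the surviving set; the gains would then be telescoped to keep the running average of $v'_t/w'_t$ above $1/(1+\beta)$. Establishing this aggregate bound --- rather than the individual pointwise ratios --- is the crux and the step I expect to demand the most care, and depending on the admissible spread of clause widths it may require an additional structural assumption.
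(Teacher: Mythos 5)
Your proof of the width inequality is correct and is essentially the paper's own argument: conditional on the history $\mathcal F_{t-1}$ of the online generation, a live step contributes expected width $q_t w'_t = \beta v'_t$ to $\mathcal C_r$ and expected width $(1-q_t)v'_t \le v'_t$ to $\mathcal C_h$, and summing this pointwise domination over the stopping time gives the claim. You are in fact more careful than the paper on two counts: you make explicit that $\{S\ge t\}$ is $\mathcal F_{t-1}$-measurable, and you note that the identity $q_t=\beta v'_t/w'_t$ requires $v'_t\le w'_t$, which holds only because $\pi_{\text{h}}$ is sorted by increasing width; the paper silently replaces $\min\{1,v'/w'\}$ by $v'/w'$ without comment.

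On the count inequality your diagnosis is exactly right, and the gap you could not close is not yours --- it is the paper's. The paper's entire proof of that half is the assertion that ``in a completely analogous way'' $\frac{1-q}{q}\le\frac{1}{\beta}$; this is equivalent to $q\ge\frac{\beta}{1+\beta}$, i.e.\ to $w'_t\le(1+\beta)v'_t$, which is precisely the pointwise condition you observed need not hold. Worse, the aggregate bound you propose as a rescue, $\E[\sum_{t\le S}v'_t/w'_t]\ge\frac{1}{1+\beta}\E[S]$, cannot be established, because the count inequality itself appears to be false in general. Sketch: take $m/2$ clauses of width $1$ and $m/2$ clauses of width $W\ge\log^2 m$, all on disjoint variables, with $\rho\equiv 1/2$ and $\beta$ a small constant. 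Then $C_s$ is almost surely a unit clause, wide clauses are essentially never true, and while unit clauses remain each step is random with probability only $q_t=O(\beta/W)$. The loop deterministically performs at least $\min\{1/Q-1,\,m-1\}$ steps, so integrating over $Q$ gives $\E[S]=\Theta(\log m)$, and essentially all of these steps are heuristic; steps that are random with probability $\beta$ occur only after the unit clauses are exhausted, which requires $1/Q=\Omega(m)$, an event of probability $O(1/m)$. Hence $\E[|\mathcal C_h|]=\Theta(\log m)$ while $\E[|\mathcal C_r|]=O(\beta\log m/W+\beta)=O(\beta)$, so the ratio is $\Omega(\log m)/\beta$ rather than $O(1/\beta)$. (The width inequality survives in this instance --- each random step contributes expected width $\beta v'_t$ --- which is exactly why your width argument closes while no amortization of the counts can.) So your hedge that the count bound ``may require an additional structural assumption'' is correct: it holds, with the paper's one-line proof, if all clause widths are within a factor $1+\beta$ of one another, or if Procedure P1 is changed to take a random step with probability $\beta$ instead of $\beta\min\{1,v'/w'\}$. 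As stated, however, the first inequality of the proposition is not provable, and this matters downstream: it is the half invoked for the randomness-complexity bound in Theorem~\ref{theorem2}, which therefore needs a different derivation.
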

\begin{proof}
Suppose we are at some step $i$ in the generation process, and we have not yet terminated the loop; 
let $v' = W(\sigma(k))$ and let $w'$ be the average width of the remaining clauses. 
The probability of choosing a random step is $q = \beta \min\{1, v'/w'\}$. Thus, the ratio of the subsequent clause's 
expected contribution of $W(\mathcal C_h)$ to $W(\mathcal C_r)$ is given by
$$
\frac{ (1-q) v'}{ q w' } \leq \frac{v'}{(\beta v'/w') \cdot w' } = \frac{1}{\beta}.
$$

In a completely analogous way, the ratio of the subsequent clause's expected contribution of $|\mathcal C_h|$ to $|\mathcal C_r|$ is given by $\frac{ 1-q }{ q } \leq \frac{1}{\beta}.$
\end{proof}

\begin{lemma}
\label{comb-lemma}
Let $q_1, \dots, q_k \in [0,1-b]$ for $b \in (0,1/2]$. Let $H(x)$ denote the binary entropy function
$$
H(x) = -x \log_2(x) - (1-x) \log_2(1-x).
$$

Then 
$$
\sum_{i=1}^k H(q_i) \prod_{j=1}^{i-1} q_j \leq O(\log(1/b)).
$$
\end{lemma}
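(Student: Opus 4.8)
The plan is to avoid the naive bound $H(q_i)\le 1$, which only yields $O(1/b)$ via the geometric decay $\prod_{j<i} q_j \le (1-b)^{i-1}$, and instead to split the binary entropy into its two natural pieces and exploit a telescoping structure. First I would introduce the partial products $P_i = \prod_{j=1}^{i} q_j$ with $P_0 = 1$, noting that $(P_i)$ is non-increasing in $[0,1]$ and satisfies $P_i = q_i P_{i-1}$, so the target sum is exactly $\sum_{i=1}^{k} H(q_i) P_{i-1}$. Writing $H(q_i) = q_i \log_2(1/q_i) + (1-q_i)\log_2\frac{1}{1-q_i}$ splits each term into a \emph{small-$q$} contribution and a \emph{large-$q$} contribution, and I would bound the two resulting sums separately. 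I adopt the usual entropy convention $0\log 0 = 0$, which also cleanly handles the degenerate case $q_i = 0$, where $P_i$ and all later products vanish.

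For the $(1-q_i)$ piece, the key observation is that $(1-q_i) P_{i-1} = P_{i-1} - P_i$, while $\log_2\frac{1}{1-q_i} \le \log_2(1/b)$ because $1-q_i \ge b$ by hypothesis. Hence this piece is at most $\log_2(1/b)\sum_{i=1}^{k}(P_{i-1}-P_i) = \log_2(1/b)(P_0 - P_k) \le \log_2(1/b)$. This is precisely where the $O(\log(1/b))$ comes from, and the choice $q_i \equiv 1-b$ shows the bound is tight.

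For the $q_i$ piece, I would use $q_i P_{i-1} = P_i$ together with $\log_2(1/q_i) = \log_2(P_{i-1}/P_i)$ to rewrite the contribution of step $i$ as $P_i \log_2(P_{i-1}/P_i)$. Applying $\ln t \le t-1$ at $t = P_{i-1}/P_i \ge 1$ gives $P_i \log_2(P_{i-1}/P_i) \le \frac{1}{\ln 2}(P_{i-1}-P_i)$, and telescoping once more bounds this sum by $\frac{1}{\ln 2}(P_0 - P_k) \le 1/\ln 2 = O(1)$. Adding the two estimates yields $\sum_{i=1}^{k} H(q_i) P_{i-1} \le \log_2(1/b) + 1/\ln 2 = O(\log(1/b))$, as claimed.

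I expect the main obstacle to be conceptual rather than computational: recognizing that the crude entropy bound is lossy by a full $1/b$ factor, and seeing that the right move is to decompose $H$ so that \emph{each} half telescopes against the increments $P_{i-1}-P_i$. Once the decomposition is in place, both halves follow from elementary inequalities ($\log_2\frac{1}{1-q_i}\le\log_2(1/b)$ on one side and $\ln t \le t-1$ on the other). A minor point to handle with care is the boundary behavior when some $q_i = 0$ (the case $q_i\to 1^{-}$ being excluded by $q_i \le 1-b$), which the entropy convention resolves without incident.
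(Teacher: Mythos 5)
Your proof is correct, and it takes a genuinely different route from the paper's. The paper argues by induction on $k$: using the recursion $S(q_1,\dots,q_k) = H(q_1) + q_1\,S(q_2,\dots,q_k)$, it proves $S \leq 2\log(1/b)$ by showing that the function $x \mapsto H(x) + 2x\log(1/b)$ has its unique critical point above $1-b$, hence is increasing on $[0,1-b]$, and then verifying the endpoint inequality $H(1-b) + 2(1-b)\log(1/b) \leq 2\log(1/b)$, which forces a preliminary reduction to the case $b < 0.01$. Your telescoping argument replaces that calculus with two elementary observations: the $(1-q_i)$ half of the entropy pairs with $(1-q_i)P_{i-1} = P_{i-1}-P_i$ and contributes at most $\log_2(1/b)\,(P_0 - P_k)$, while the $q_i$ half becomes $P_i \log_2(P_{i-1}/P_i) \leq (P_{i-1}-P_i)/\ln 2$ via $\ln t \leq t-1$. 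Your version buys several things: it works uniformly for all $b \in (0,1/2]$ with no case split; it gives the explicit constant $\log_2(1/b) + 1/\ln 2$, which is even slightly better than the paper's $2\log(1/b)$ for small $b$; it isolates the true source of the $\log(1/b)$ factor (the $(1-q_i)\log_2\tfrac{1}{1-q_i}$ term), with the example $q_i \equiv 1-b$ certifying tightness; and your treatment of the degenerate case $q_i = 0$ via the convention $0 \log 0 = 0$ is sound, since all later partial products vanish. The paper's induction has the modest virtue of mirroring the sequential structure of the stopped Bernoulli sampling process from which the sum arises (Lemma~\ref{aprop3}), but as a proof of this lemma it is strictly heavier machinery, and your argument could be substituted for it with no loss.
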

\begin{proof}
It suffices to show this under the assumption that $b < 0.01$.  Let us define $S(q_1, \dots, q_k) = \sum_{i=1}^k H(q_i) \prod_{j=1}^{i-1} q_j$; observe that
$$
S(q_1, \dots, q_k) = H(q_1) + q_1 S(q_2, \dots, q_k).
$$

We show by induction on $k$ that $S(q_1, \dots, q_k) \leq 2 \log(1/b)$. The base case of the induction $k = 0$ is vacuous. For the induction step, we have
$$
S(q_1, \dots, q_k) \leq H(q_1) + q_1 \cdot 2 \log(1/b).
$$

Consider the function $x \mapsto H(x) + 2 x \log(1/b)$.  It has a positive derivative as $x 
\rightarrow 0$, and has a unique critical point at $x = \frac{1}{1+2^{2 \log b}} > 1-b$. Hence it is increasing on the range $[0,1-b]$, and in particular we have $$
S(q_1, \dots, q_k) \leq H(q_1) + 2 q_1 \log(1/b) \leq H(1-b) + 2(1-b) \log(1/b).
$$

Routine calculations show that $H(1-x) + 2 (1-x) \log(1/x) \leq 2 \log(1/x)$ for $x < 0.01$.
\end{proof}

\begin{lemma}
\label{aprop3}
For a given clause $C = \pi(i)$, let $b = 1-\max_{a \in C} \rho(a)$. Then the lazy sampling for $C$ can be implemented to use $O( \log(1/b) )$ random bits and $O(W(C) \log^2(1/b))$ time complexity in expectation, conditioned on all prior random values including $\pi(1), \dots, \pi(i-1)$.
\end{lemma}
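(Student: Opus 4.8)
The plan is to treat the short-circuit evaluation of $C$ as the sampling of a single discrete random variable and to control both cost measures through its entropy. First I would fix the conditioning: given $\pi(1),\dots,\pi(i-1)$, the set of already-assigned variables is determined. If some assigned literal of $C$ is false then $C$ is unsatisfied and no randomness is drawn, so I may assume all assigned literals are true; let $a_1,\dots,a_r$ be the unassigned literals in processing order, with true-probabilities $q_t = \rho(a_t)$. The procedure draws them one at a time, halting at the first false literal (so $C$ is unsatisfied) or after exhausting all $r$ (so $C$ is satisfied). By definition of $b = 1-\max_{a\in C}\rho(a)$ we have $q_t \le 1-b$ for every $t$, which is precisely the hypothesis of Lemma~\ref{comb-lemma}.

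For the randomness bound, the key point is that the whole evaluation yields a single outcome $Y \in \{1,\dots,r,\infty\}$, namely the index of the first false literal (or $\infty$ if none). Because the variables are independent, the chain rule for this stopping process gives
$$
H(Y) = \sum_{t=1}^{r} H(q_t)\prod_{s=1}^{t-1} q_s \le O(\log(1/b)),
$$
where the inequality is Lemma~\ref{comb-lemma}. Generating $Y$ with the entropy-optimal sampler of \cite{randomness} then costs $H(Y)+O(1) = O(\log(1/b))$ bits in expectation. The main obstacle—and the reason for phrasing the analysis around the joint outcome $Y$—is to incur only one additive $O(1)$ overhead for the entire clause: sampling the $r$ literals as separate Bernoulli draws would pay $\Theta(1)$ on each literal actually reached, for a total of $\Theta(\sum_t \prod_{s<t} q_s) = \Theta(1/b)$, which overshoots the target. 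Driving all the literals from a single lazily revealed uniform (equivalently, applying the entropy-optimal sampler to $Y$ itself rather than literal-by-literal) matches $H(Y)$ up to a constant and so avoids this accumulation.

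For the time bound I would use the crude count that at most $W(C)$ literals are ever touched. Each touched literal requires checking whether its variable is assigned and, if not, drawing its value; since all probabilities lie in $[b,1-b]$, the sampler manipulates quantities of precision $O(\log(1/b))$, so I would charge $O(\log^2(1/b))$ to each literal for this bounded-precision arithmetic together with the bits it consumes. Multiplying by the at-most-$W(C)$ literals yields the claimed $O(W(C)\log^2(1/b))$ expected time. Here, in contrast to the randomness analysis, the entropy cancellation is not needed: the bound follows from the worst-case number of processed literals, so I would leave the constants implicit.
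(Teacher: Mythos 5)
Your proposal is correct and follows essentially the same route as the paper's proof: the paper likewise restricts attention to the unassigned literals with $q_i \in [0,1-b]$, short-circuits at the first false one, invokes the entropy-optimal sampler of \cite{randomness}, and bounds the resulting entropy expression $\sum_{i} H(q_i)\prod_{j<i} q_j$ by $O(\log(1/b))$ via Lemma~\ref{comb-lemma}. The only difference is bookkeeping: the paper cites that sampler's amortized guarantee of $\E\bigl[\sum_{i \leq K} H(X_i)\bigr] + O(\log(1/b))$ bits over the stopped Bernoulli sequence, whereas you arrive at the identical sum as $H(Y)$ of the first-failure index via the chain rule and sample $Y$ in one shot.
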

\begin{proof}
Let $a_1, \dots, a_k$ denote the literals in $C$ which have not yet been assigned in $\nu$; correspondingly define $q_i = \rho(a_i) \in [0,1-b]$ for $i = 1, \dots, k$. For the lazy sampling, we will successively sample a value $X_i \sim \text{Bernoulli}(q_i)$ and include either the literal $a_i$ or the literal $\neg a_i$ depending on whether $X_i = 1$ or $X_i = 0$ respectively. The process stops when we encounter the first index $K \leq k$ with $X_K = 0$, as then we already know that clause $C$ is false under $\nu$. 

We draw the variables $X_i$ using the algorithm of \cite{randomness} with parameter $\eps = b$. This uses  $\E[\sum_{i=1}^{K} H(X_i)] + O(\log (1/b) )$ random bits and $O( \E[K] \log^2(1/b) )$ work in expectation.\footnote{The algorithm of \cite{randomness} has an additional parameter $d$, which is the maximum denominator of the the $q_i$ values as rational numbers. Via a quantization argument, it can be assumed that $d \leq \text{poly}(1/b)$.} We calculate here
$$
\E \Bigl[ \sum_{i=1}^{K} H(X_i) \Bigr] =  \sum_{i=1}^{k} H(q_i) \prod_{j=1}^{i-1} q_i.
$$

As shown in Lemma~\ref{comb-lemma}, this is at most $O(\log (1/b) )$.
\end{proof}

\begin{proposition} 
\label{aprop2}
If we condition on $C_s$ and the full variable assignment $\nu$, then each clause $C$ has
$$
\Pr( C \in \mathcal C_r ) \leq \frac{2(1 + \log L)}{L}. 
$$
\end{proposition}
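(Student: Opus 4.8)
The plan is to fix the conditioning, so that the set $\mathcal T$ of clauses satisfied by $\nu$ is determined, with $|\mathcal T| = L$ and $C_s \in \mathcal T$; I write $\mathcal T' = \mathcal T \setminus \{C_s\}$ for the $L-1$ remaining true clauses and assume $C \neq C_s$. For a fixed permutation $\pi$, let $B_C$ denote the number of clauses of $\mathcal T'$ preceding $C$ in $\pi$. When the inner loop reaches $C$ it has $\ell = 1 + B_C$, so $C$ is examined precisely when $1 + B_C \le 1/Q$; since $Q$ is uniform on $(0,1]$ and independent of the generation of $\pi$, this happens with probability $\frac{1}{B_C + 1}$. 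Because $C \in \mathcal C_r$ additionally demands that $C$ be placed at a random (rather than heuristic) step, I would decompose over $b = B_C$ and write $\Pr(C \in \mathcal C_r) = \sum_{b=0}^{L-1} \frac{p_b}{b+1}$, where $p_b = \Pr(C \text{ placed at a random step and } B_C = b)$.

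The crux is to show $p_b \le \frac{1}{L-b}$. I would analyze the generation of $\pi$ as an online removal process (principle of deferred decisions) and condition on the history through the removal of the $b$-th clause of $\mathcal T'$, at which point $C$ is still in the pool together with the $L-1-b$ unremoved clauses of $\mathcal T'$. I call these $L-b$ clauses \emph{important}, and let $J$ be the first step that removes an important clause. On the event defining $p_b$, step $J$ is a random step whose uniform draw is $C$; conversely, if $J$ is random and draws $C$ then no clause of $\mathcal T'$ was removed between $\tau_b$ and $J$, so $B_C = b$. All $L-b$ important clauses are present in the pool up to step $J$, and a uniform draw at a random step treats them symmetrically, whence $p_b = \Pr(J \text{ random, draws } C) = \frac{1}{L-b}\,\Pr(J \text{ random, draws an important clause}) \le \frac{1}{L-b}$.

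It then remains to sum: $\Pr(C \in \mathcal C_r) \le \sum_{b=0}^{L-1}\frac{1}{(L-b)(b+1)}$. By partial fractions this equals $\frac{2 H_L}{L+1}$, where $H_L = \sum_{k=1}^{L} 1/k$ is the $L$-th harmonic number; using $H_L \le 1 + \log L$ and $\frac{1}{L+1} \le \frac{1}{L}$ gives the claimed bound $\frac{2(1+\log L)}{L}$. If $C$ itself belongs to $\mathcal T'$, the identical argument applies with the important count reduced by one, yielding the strictly smaller bound $\frac{2 H_{L-1}}{L}$, which is again at most $\frac{2(1+\log L)}{L}$.

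I expect the inequality $p_b \le \frac{1}{L-b}$ to be the main obstacle. The difficulty is that the heuristic steps are not symmetric across clauses and may, in the worst case, remove true clauses early, so the straightforward route of union-bounding each step $j$ by $\frac{1}{|\text{pool}_j|}$ and summing over $j$ overcounts by the entire length of the phase in which $B_C = b$ and is hopelessly lossy. The resolution is to collapse that phase to the single first-important-removal step $J$: conditioned on $J$ being random, its draw is genuinely uniform over the $L-b$ important clauses, which are guaranteed to still be present, and this one observation supplies both the factor $\frac{1}{L-b}$ and the constraint $B_C = b$, with no dependence on how the heuristic behaves.
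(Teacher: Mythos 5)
Your proof is correct, but it takes a genuinely different route from the paper's. The paper conditions on $Q$ and uses a coupling: it pre-generates a uniformly random permutation $\tilde \pi$ to implement the random steps, lets $\tilde M$ be the number of true clauses (other than $C$ and $C_s$) preceding $C$ in $\tilde \pi$, and observes that if $C$ is chosen at a random step then the actual number of true clauses seen before $C$ dominates $\tilde M$ (heuristic steps can only insert extra clauses ahead of $C$); since $\tilde M$ is uniform on $\{0,\dots,L-2\}$, this gives $\Pr(C \in \mathcal C_r \mid Q) \leq \min\{1, \tfrac{2}{QL}\}$, and integrating over $Q$ finishes. You instead integrate over $Q$ first (getting the factor $\tfrac{1}{B_C+1}$), then decompose over $b = B_C$ and prove the key inequality $p_b \leq \tfrac{1}{L-b}$ by a stopping-time argument: at the first removal of an ``important'' clause after $\tau_b$, conditioned on that step being random, the draw is exchangeable among the $L-b$ important clauses still in the pool, and the events $\{J \text{ random, draws } C''\}$ are disjoint across important $C''$, so each has probability at most $\tfrac{1}{L-b}$. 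Both arguments neutralize the adversarial heuristic steps, but differently: the paper via stochastic domination under a single coupling, you via exchangeability at a carefully chosen stopping time; your route is more hands-on but yields the clean closed form $\sum_{b}\tfrac{1}{(L-b)(b+1)} = \tfrac{2H_L}{L+1}$, essentially matching the paper's $\tfrac{2}{L}\bigl(1+\log(L/2)\bigr)$. One small slip in a side remark: for $C \in \mathcal T'$ your bound $\tfrac{2H_{L-1}}{L}$ is actually \emph{not} smaller than $\tfrac{2H_L}{L+1}$ when $L \geq 3$ (compare $(L+1)H_{L-1}$ with $LH_L = LH_{L-1}+1$), but this is inconsequential since $\tfrac{2H_{L-1}}{L} \leq \tfrac{2(1+\log L)}{L}$ holds anyway, so the proposition follows in both cases.
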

\begin{proof}
 We want to estimate the probability that clause $C$ is chosen for $\mathcal C_r$ before the loop aborts early.   We claim that, conditioned on $Q$, there holds
\begin{equation}
\label{ql-eqn}
\Pr( C \in \mathcal C_r  \mid Q) \leq \min\{1, \frac{2}{Q L} \};
\end{equation}
the claimed result holds after integrating over $Q$. 

When $L \leq 2$ then Eq.~(\ref{ql-eqn}) holds vacuously since $Q \in (0,1]$ and the RHS is equal to one.  So suppose $L \geq 3$. Let $C'_1, \dots, C'_{L-2}$ be any other true clauses aside from $C$ and $C_s$. Consider the following coupling experiment: we first generate a uniformly random permutation $\tilde \pi$; whenever we execute a random sampling step, we choose the next available element from $\tilde \pi$. Let $\tilde M$ denote the number of clauses $C'_j$ that appear before $C$ in $\tilde \pi$, and let $M$ denote the total number of true clauses, from both $\mathcal C_r$ and $\mathcal C_h$, sampled before $C$.  Note that if $C$ is 
chosen for a random sampling step, then $M \geq \tilde M$. So $C \in \mathcal C_r$ holds only if $\tilde M < 1/Q - 1$.

Since $\tilde \pi$ is a uniform permutation, $\tilde M$ is a discrete uniform random variable in 
$\{0, \dots, L - 2 \}$. Hence $$
\Pr(\tilde M < 1/Q - 1) = \frac{\lceil 1/Q - 1 \rceil}{L-1} \leq \frac{1}{Q(L-1)} \leq \frac{2}{Q L},
$$ verifying Eq.~(\ref{ql-eqn}).
\end{proof}

We can show the claimed result of Theorem~\ref{theorem2}.

\begin{proof}[Proof of Theorem~\ref{theorem2}.]
For the work factor, there are two costs to the algorithm: dealing with the variables in $C_s$ (i.e. setting the variable values), and dealing with the clauses in each step. We examine them in turn.

First, setting the variables in $C_s$ has cost $O(W(C_s))$. Since each clause $C$ is chosen with probability proportional to $\rho(C)$, the overall expected cost here is
$$
\frac{\sum_C  \rho(C) W(C)}{\rho(\Phi)}.
$$

Note that $\mu \geq \rho(C)$ for any clause $C$. Hence this sum is at most
$$
\frac{\sum_C  \mu W(C)}{\rho(\Phi)} = \frac{m w \mu}{ \rho(\Phi) } = m w p.
$$

Next, by Lemma~\ref{aprop3}, the expected cost of the trial's sampling steps is at most
$$
\E[ W(\mathcal C_r \cup \mathcal C_h) ] \cdot O( \log^2(1/b)).
$$

Summing over all clauses $C$ and using Propositions~\ref{aprop1} and \ref{aprop2}, we have
$$
\E[ W(\mathcal C_r \cup \mathcal C_h) ] \leq \E[W(\mathcal C_r)]/\beta
$$ 
and 
$$
\E[ W(\mathcal C_r) ] \leq \sum_{C} W(C) \E \Bigl[ \frac{2(1 + \log L)}{L} \Bigr] = 2 m w \E \Bigl[ \frac{1 + \log L}{L}  \Bigr].
$$

So overall the expected cost of the sampling steps is 
$$
O( m w \log^2(1/b)/\beta ) \cdot \E \Bigl[ \frac{1 + \log L}{L}  \Bigr].
$$

Consider random variable $U = 1/L$. By definition we have $\E[ U ] = p$. 
Furthermore, since the function $f(x)  = x (1 + \log(1/x))$ is concave-down, Jensen's inequality gives
$$
\E \Bigl[ \frac{1 + \log L}{L} \Bigr] =  \E[ f(U) ] \leq f( \E[U ] ) = f(p) = p (1 + \log(1/p)).
$$
completing the desired time complexity bound.

The bound on randomness complexity is completely analogous.
\end{proof}

\bibliographystyle{aaai2026}
\bibliography{reference.bib}

\end{document}